\newcommand{\R}{\mathbb{R}}
\newcommand{\E}{\mathbb{E}}
\newcommand{\SHR}{\text{\it SHR}}
\newcommand{\T}{\mathbf{T}}
\newcommand{\V}{\mathbf{V}}
\newcommand{\W}{\mathbf{W}}
\newcommand{\X}{\mathbf{X}}
\newcommand{\Y}{\mathbf{Y}}
\newcommand{\bH}{\mathbf{H}}
\newcommand{\h}{\mathbf{h}}
\newcommand{\x}{\mathbf{x}}
\newcommand{\y}{\mathbf{y}}
\newcommand{\z}{\mathbf{z}}
\newcommand{\w}{\mathbf{w}}
\newcommand{\dx}[1]{\ \text{d} #1}
\newcommand{\indicator}[1]{\mathds{1}\!\left\{ #1 \right\}}
\newtheoremstyle{mytheoremstyle}
  {\topsep} 
  {0pt} 
  {} 
  {} 
  {\bfseries} 
  {.} 
  {.5em} 
  {} 
\theoremstyle{mytheoremstyle}
\newtheorem{prop}{Proposition}
\newtheorem{cor}{Corollary}
\newtheorem{lem}{Lemma}
\newtheorem{defn}{Definition}
\newcommand{\blind}{0}
\title{Randomization for the susceptibility effect \\ of an infectious disease intervention}
  \author{[Blinded for review]}
\author{Daniel J. Eck$^1$, Olga Morozova$^{2}$, and Forrest W. Crawford$^{2,3,4,5}$ \\[1em]
1. Department of Statistics, University of Illinois Urbana-Champaign \\
2. Department of Biostatistics, Yale School of Public Health \\ 
3. Department of Statistics \& Data Science, Yale University \\
4. Department of Ecology \& Evolutionary Biology, Yale University \\
5. Yale School of Management }
\begin{document}

\maketitle

\begin{abstract}
\noindent Randomized trials of infectious disease interventions, such as vaccines, often focus on groups of connected or potentially interacting individuals. When the pathogen of interest is transmissible between study subjects, interference may occur: individual infection outcomes may depend on treatments received by others. Epidemiologists have defined the primary causal effect of interest -- called the ``susceptibility effect'' -- as a contrast in infection risk under treatment versus no treatment, while holding exposure to infectiousness constant.  
A related quantity -- the ``direct effect'' --  is defined as an unconditional contrast between the infection risk under treatment versus no treatment.  
The purpose of this paper is to show that under a widely recommended randomization design, the direct effect may fail to recover the sign of the true susceptibility effect of the intervention in a randomized trial when outcomes are contagious. 
The analytical approach uses structural features of infectious disease transmission to define the susceptibility effect. A new probabilistic coupling argument reveals stochastic dominance relations between potential infection outcomes under different treatment allocations. 
The results suggest that estimating the direct effect under randomization may provide misleading inferences about the effect of an intervention -- such as a vaccine -- when outcomes are contagious.  
\\[1em]
\noindent \textbf{Keywords}: contagion, direct effect, interference, probabilistic coupling, transmission model, vaccine 
\end{abstract}



\section{Introduction}

Randomized trials are widely used in the evaluation of infectious disease interventions among potentially interacting individuals \citep{halloran1997study,datta1999efficiency,halloran2010design}. 
For example, randomized trials have been employed to evaluate the effects of interventions, including vaccines, to prevent  
influenza \citep{belshe1998efficacy,hayden2000inhaled,welliver2001effectiveness,monto2002zanamivir}, 
pertussis \citep{simondon1997randomized}, 
typhoid \citep{acosta2005multi},
and
cholera \citep{clemens1986field,perez2014assessing}, among many other diseases. The primary goal of most infectious disease intervention trials is to estimate the causal effect of treatment on the infection risk of the individual who receives it.  However, when the infection is transmissible, or contagious, between study subjects, the treatment delivered to one subject may affect the infection outcome of others, via prevention of the original subject's infection or reduction in their infectiousness once infected \citep{halloran1991study,halloran1995causal}.  This phenomenon -- called ``interference'' in the causal inference literature -- complicates definition and estimation of causal intervention effects under contagion \citep{halloran1991study,halloran1995causal,vanderweele2011effect,halloran2016dependent,halloran2017simulations,ogburn2017vaccines,ogburn2018challenges}.


The ``susceptibility effect'' is of primary epidemiological interest in vaccine trials because it summarizes the effect of the intervention on the person who receives it, holding exposure to infection constant \citep{halloran1995causal,halloran1997study,golm1999semiparametric,ohagan2014estimating}.  \citet[][page 19]{halloran2010design} write, ``Historically, the primary focus has been how well vaccination protects the vaccinated individual. $VE_S$, the vaccine effiacy for susceptibility, is a measure of how protective vaccination is against infection''.   The susceptibility effect is sometimes called the ``vaccine effect on susceptibility'', the ``conditional direct causal effect'' \citep{halloran1995causal}, or per-exposure effect \citep{ohagan2014estimating}, and may be represented by a hazard ratio, risk ratio, or risk difference \citep{halloran1991direct,halloran1997study,halloran1999design,ohagan2014estimating}.  Unfortunately, the susceptibility effect can be difficult to estimate because exposure to infection cannot always be precisely measured.


A related quantity, called the ``direct effect'', is defined as an unconditional contrast between infection outcomes among treated and untreated individuals \citep{halloran1991study,halloran1997study,halloran2010design,halloran2016dependent}.  In an influential paper, \citet{hudgens2008toward} proposed a randomization design and a definition of the ``direct effect'' under interference in a clustered study population, along with effect estimators. Informally, the direct effect is defined as a contrast between the rate of infection for an individual under treatment versus no treatment, averaged over the conditional distribution of treatments to others in the same cluster \citep{vanderweele2011effect,savje2017average}.  
The direct effect estimand introduced by \citet{hudgens2008toward} has been applied in empirical analyses of randomized trials \citep[e.g.][]{perez2014assessing,buchanan2018assessing}.


The susceptiblity effect and direct effect are not the same.  However, they may appear to measure similar causal features of the effect of an intervention on individuals who receive it, especially under randomization. Informal descriptions of the direct effect imply comparability between treated and untreated individuals: \citet[][page 332]{halloran1991study} write, ``The direct effect of an intervention received by an individual is the difference betweeen the outcome in the individual with the intervention and what the outcome would have been without the intervention, all other things being equal''.  In the textbook \emph{Design and Analysis of Vaccine Studies}, \citet[][page 272]{halloran2010design} state ``An example of a direct effect is the reduction in the probability of becoming infected that results from being vaccinated, given exposure to infection.''
Writing of a randomized study design in which the direct effect is defined as the comparison of infection outcomes in treated individuals with untreated individuals, \citet[][page 334]{halloran1991study} state: ``After intervention, design I is the only design with comparable exposure to infection in the comparison groups''.  
Randomization ensures that on average, treated and untreated individuals do not vary systematically in their baseline characteristics.  
Indeed, \citet[][page 146]{halloran1995causal} write ``Under a random assignment of the vaccine to the population, then if everyone were exposed to infection, the average causal direct effect of the vaccine on the transmission probability would be estimated as the difference in the average outcomes in the unvaccinated and vaccinated individuals under the actual treatment assignment''.  In other words, when exposure is present, randomization ensures that the direct effect estimates the susceptibility effect.  

But even when treatment is randomized, exposure to infection can be systematically different among treated and untreated individuals during the study. Researchers have warned that this differential exposure can confound estimates of the ``direct effect'' of the intervention \citep{halloran1991direct,halloran1991study,struchiner1994malaria,halloran1995causal,halloran2010design,kenah2015semi,morozova2018risk}, but the relationship between the randomization design and the disease transmission process remains obscure \citep{struchiner2007randomization,van2013estimation,ohagan2014estimating}.  Do contrasts of infection outcomes between treated and untreated subjects, as proposed by \citet{hudgens2008toward} as the ``direct effect'', recover the susceptibility effect of the intervention when the population is clustered, treatment is randomized, and outcomes are contagious?



The purpose of this paper is to examine the meaning of the ``direct effect'' defined by \citet{hudgens2008toward} when infectious disease outcomes are transmissible in a study of potentially interacting individuals within clusters.  We first provide a formal definition of the causal susceptibility effect \citep{halloran1997study}, which is of primary interest in trials of infectious disease interventions.  We then briefly review the direct effect, and define three common randomization designs -- Bernoulli, block, and cluster randomization -- that may be employed in empirical trials of infectious disease interventions.  To compare the susceptibility and direct effects in a trial of an infectious disease intervention, we evaluate infection outcomes under a general structural model of infectious disease transmission in clusters that accommodates individually varying susceptibility to infection, infectiousness, and exogenous source of infection.  This type of structural model has found wide application in studies of infectious disease outcomes in clusters of individuals \citep{rhodes1996counting,longini1999markov,auranen2000transmission,oneill2000analyses,becker2003estimating,becker2004estimating,cauchemez2004bayesian,cauchemez2006investigating,cauchemez2009household,becker2006estimating,yang2006design,kenah2013non,kenah2015semi,tsang2015influenza,tsang2016individual,morozova2018risk}.  
We show that under some forms of randomization, the direct effect may not recover the sign of the true susceptibility effect of the intervention on the individual who receives it. In particular, when the intervention both helps protect treated individuals from infection, and helps prevent infected treated individuals from transmitting the infection to others, the direct effect can nevertheless be positive (indicating harm) under the randomization design proposed by \citet{hudgens2008toward}.  The results are derived using a probabilistic coupling argument that reveals stochastic dominance relations between infection outcomes under different treatment allocations.  These results substantially sharpen the claims of \citet{halloran1991direct} and \citet{struchiner2007randomization}, and generalize bias results for clusters of size two \citep{halloran2012causal,morozova2018risk}.


\section{Setting}

Consider a population of $N$ clusters, and let $n_i$ be the number of individuals in cluster $i$.  Suppose the outcome of interest is infection by an infectious disease that is transmissible between individuals within clusters, but not between clusters. Let $T_{ij}$ be the random infection time of subject $j$ and let $Y_{ij}(t)=\indicator{T_{ij}<t}$ be the indicator of prior infection. A subject $j$ is called susceptible at time $t$ if $Y_{ij}(t)=0$ and infected if $Y_{ij}(t)=1$.  The joint treatment vector $\x_i=(x_{i1},\ldots,x_{in_i})$ is allocated at baseline, $t=0$.  Following notation introduced by \citet{hudgens2008toward}, we will sometimes write the joint treatment allocation in cluster $i$ as $\x_i=(x_{ij},\x_{i(j)})$, where $x_{ij}$ is the treatment to subject $j$, and $\x_{i(j)}$ is the vector of treatment assignments to subjects other than $j$ in cluster $i$.


\subsection{Target parameter: susceptibility effect}

To define meaningful intervention effects for infectious disease outcomes, it is often necessary to consider a joint intervention on both the treatment assignment and exposure history of cluster members \citep{halloran1995causal,ohagan2014estimating}.  We use potential outcome notation \citep{rubin2005causal} to define causal effects.  
Let the infection status history of all subjects other than $j$ in cluster $i$ be denoted $\bH_{i(j)} = \left\{ Y_{ik}(s),\ k\neq j,\ s \ge 0 \right\}$, with a particular realization denoted by $\h_{i(j)}$. The infection history $\bH_{i(j)}$ is a vector of $n_i-1$ indicator functions denoting infection status for all times $0\le s < \infty$.  
Let $T_{ij}(\x_i,\h_{i(j)})$ be the potential infection time of $j$ when treatment is set to $\x_i$ and the infection history of individuals other than $j$ is set to $\h_{i(j)}$.  Let $Y_{ij}(t,\x_i,\h_{i(j)})=\indicator{T_{ij}(\x_i,\h_{i(j)})<t}$ be the corresponding potential infection outcome of subject $j$ at time $t$.  It is implicit that for fixed $t$, the potential infection outcome $Y_{ij}(t,\x_i,\h_{i(j)})$ does not depend on any element $k$ of $\h_{i(j)}$ when $T_{ik}>t$. In other words, infection of $k$ after $t$ does not affect infection of $j$ prior to $t$.  When $\x_i$ is a fixed treatment allocation and the infection history of other individuals $\bH_{i(j)}$ is allowed to arise naturally without intervention on infection history, we write $T_{ij}(\x_i) = T_{ij}(\x_i,\bH_{i(j)})$ and $Y_{ij}(t,\x_i) = \indicator{T_{ij}(\x_i,\bH_{i(j)}) < t}$. 
We regard potential infection outcomes as inherently stochastic: given a treatment allocation $\x_i$ and infection histories $\h_{i(j)}$, the potential infection time $T_{ij}(\x_i,\h_{i(j)})$ is a random variable. 



Following \citep{halloran1997study} and \citet{ohagan2014estimating}, we define the susceptibility effect as a contrast of the infection outcome of $j$ under treatment ($x_{ij}=1$) versus no treatment($x_{ij}=0$), while holding constant the treatments $\x_{i(j)}$ and infection histories $\h_{i(j)}$ of other cluster members.  Define the potential hazard of infection to subject $j$ in cluster $i$ at time $t$ as the instantaneous risk of infection at time $t$, given no infection up to $t$, holding other individuals' infection history $\h_{i(j)}$ and treatments $\x_{i(j)}$ constant:
\[ \lambda_{ij}\left(t,x_{ij},\x_{i(j)},\h_{i(j)}\right) = \lim_{\epsilon\to 0}\  \E\left[Y_{ij}\left(t+\epsilon,x_{ij},\x_{i(j)},\h_{i(j)}\right) \mid Y_{ij}\left(t,x_{ij},\x_{i(j)},\h_{i(j)}\right)=0\right] \]
when this limit exists.  The susceptibility hazard ratio (SHR) contrasts potential hazards under treatment versus no treatment of $j$, while holding all else -- including exposure to infection -- constant \citep{halloran1997study}:
\begin{equation}
  \SHR_{ij}\left(t,\x_{i(j)},\h_{i(j)}\right) = \frac{\lambda_{ij}\left(t,1,\x_{i(j)},\h_{i(j)}\right)}{\lambda_{ij}\left(t,0,\x_{i(j)},\h_{i(j)}\right)} .
  \label{eq:shr}
  \end{equation}
Informally, $\SHR_{ij}$ contrasts the instantaneous potential risk of infection of susceptible subject $j$ at time $t$ under treatment versus no treatment, while holding constant the treatments and infection histories of others.  
\citet[][Table 1]{halloran1997study} define the ``vaccine effect on susceptibility'' as $1-\SHR_{ij}$. 
Cluster and population-level susceptibility estimands may be defined as expectation of $\SHR_{ij}(t,\X_{i(j)},\bH_{i(j)})$, or as a ratio of expectations of the hazards.  
Analogous exposure-conditioned susceptibility effects can be defined on the risk difference and odds ratio scales \citep{ohagan2014estimating}.

\subsection{The ``direct effect'' in a randomized trial}

Define the expected individual infection outcome under join treatment $\x_i$ as $\overline{Y}_{ij}(t,\x_i) = \E[Y_{ij}(t,\x_i)]$, where expectation is with respect to the infection outcomes in cluster $i$. 
Let $\mathcal{X}^{n}=\{0,1\}^n$ be the set of all binary vectors of $n$ elements.  
We define causal estimands by comparing average infection outcomes under different treatment allocations to the cluster. These definitions are taken, with minor changes in notation, from \citet{hudgens2008toward}.  Define the \emph{individual average potential outcome} as
\begin{equation}
\overline{Y}_{ij}(t,x) = \sum_{\x_{i(j)}\in \mathcal{X}^{n_i-1}} \overline{Y}_{ij}(t,x,\x_{i(j)}) \Pr(\X_{i(j)}=\x_{i(j)} | X_{ij}=x) .
\label{eq:iapo}
\end{equation}
Informally, $\overline{Y}_{ij}(t,x)$ is the individual infection outcome under $x_{ij}=x$, averaged over the conditional distribution of treatments to the other individuals in cluster $i$.  
Define the \emph{cluster average potential outcome} as
$\overline{Y}_i(t,x) = n_i^{-1} \sum_{j=1}^{n_i} \overline{Y}_{ij}(t,x)$,
and the \emph{population average potential outcome} as  
$\overline{Y}(t,x) = N^{-1} \sum_{i=1}^{N} \overline{Y}_{i}(t,x)$.
\citet{hudgens2008toward} propose contrasts of these potential outcomes as causal estimands, which we rewrite in slightly different form.  Define the \emph{individual average direct effect} as 
$DE_{ij}(t) = \overline{Y}_{ij}(t,1) - \overline{Y}_{ij}(t,0)$, 
the \emph{cluster average direct effect} as
$DE_i(t) = n_i^{-1} \sum_{j=1}^{n_i} DE_{ij}(t)$,
and the \emph{population average direct effect} as 
$DE(t) = N^{-1} \sum_{i=1}^{N} DE_i(t)$.

\subsection{Randomization designs for clustered subjects}

A randomization design is a probability distribution that assigns the joint binary treatment vector $\x_i=(x_{i1}$, $\ldots$, $x_{in_i})$ within and across clusters.  

\begin{defn}[Bernoulli randomization]
The treatment is Bernoulli randomized if for every cluster $i$, 
the joint allocation 
$\x_i=(x_{i1},\ldots,x_{in_i})$ has probability $\Pr(\X_i=\x_i) = \prod_{j=1}^{n_i} p^{x_{ij}} (1-p)^{1-x_{ij}}$ for some probability $p$. 
\end{defn}


\begin{defn}[Block randomization]
The treatment is block-randomized if for every cluster $i$, 
the joint allocation 
$\x_i=(x_{i1},\ldots,x_{in_i})$ has probability 
  $\Pr(\X_i=\x_i) = \binom{n_i}{m_i}^{-1} \indicator{\sum_{j=1}^{n_i} x_{ij} = m_i}$ 
  where $0<m_i = \lfloor p n_i \rfloor$ for some probability $p>1/\min_i n_i$. 
\end{defn}


\begin{defn}[Cluster randomization]
The treatment is cluster randomized if for each cluster $i$, either all 
members of the cluster are treated, or all are untreated with probability 
$0 < p < 1$. That is, $\Pr(\X_i = (1,\ldots,1)) = p$ and  $\Pr(\X_i = (0,\ldots,0)) = 1-p$
for each cluster $i$ independently. 
\end{defn}

Block and cluster randomization designs induce dependencies in the treatment status of subjects in the same cluster. This means that the conditional treatment probability $\Pr(\X_{i(j)}=\x_{i(j)}|X_{ij}=x)$ in \eqref{eq:iapo} may differ for $x=1$ and $x=0$, and so the individual average risk difference $DE_{ij}(t)$ may not be an average of individualistic effects.  \citet{vanderweele2011effect} point out that the risk difference $DE(t)$ may suffer from difficulties in interpretation under block randomization, because it compares the outcome of a treated individual whose cluster contains $m_i-1$ others treated with an untreated individual whose cluster contains $m_i$ others treated.  \citet{savje2017average} call $DE(t)$ the ``average distribution shift effect'' because it ``captures the compound effect of changing a unit's treatment and simultaneously changing the experimental design''.  However, it remains unclear whether the direct effect $DE_{ij}(t)$ has a meaningful interpretation when interference arises via contagion.



\section{Approach}

Do the ``direct effect'' quantities $DE_{ij}(t)$, $DE_i(t)$, and $DE(t)$ above recover useful features of the susceptibility effect of interest in a randomized trial?  For example, if the treatment $x$ is a vaccine that truly helps prevent infection in the person who receives it when exposure to infection is held constant ($\SHR<1$), investigators conducting a randomized trial might want to know whether they should expect $DE(t)<0$.  To answer this question, we must specify more precisely the way that infection outcomes arise under contagion.  Epidemiologists have proposed structural models of infectious disease outcomes that formalize common ideas about the mechanism, or dynamics, of transmission in groups \citep{becker1989analysis,anderson1992infectious,andersson2000stochastic}. Many structural transmission models represent the individual risk (or hazard) of infection as an explicit function of individual treatments and possibly other covariates \citep{rhodes1996counting,longini1999markov,auranen2000transmission,oneill2000analyses,becker2003estimating,becker2004estimating,cauchemez2004bayesian,cauchemez2006investigating,becker2006estimating,yang2006design,kenah2013non,kenah2015semi,morozova2018risk}.  Structural models can be useful in both observational and randomized trials because they posit an explicit regression-style relationship linking covariates and infection outcome.  

We present a general structural model of infectious disease transmission based on the canonical stochastic susceptible-infective epidemic process \citep{becker1989analysis,andersson2000stochastic,diekmann2012mathematical}. This model, based on constructions by \citet{rhodes1996counting} and \citet{kenah2015semi}, captures the essential features of infectious disease transmission, and the effect of treatment on susceptibility and infectiousness.  In particular, this model represents the instantaneous risk (hazard) of infection experienced by subject $j$ in cluster $i$ as a non-decreasing step function whose jumps correspond to infections of other cluster members.  Conveniently, the susceptibility effect $\SHR$ corresponds explicitly to a parameter in this model. Recall that $\h_{i(j)}$ consists of the infection histories of individuals other than $j$: or $\h_{i(j)} = \{y_{ik}(t),\ k\neq j,\ t \ge 0\}$.  Let the hazard of infection experienced by a susceptible individual $j$ in cluster $i$ 
at time $t$ be 
\begin{equation}
  \lambda_{ij}(t,\x_i,\h_{i(j)}) = e^{x_{ij} \beta + \eta_{ij}} 
    \left(\alpha + \sum_{k=1}^{n_i} y_{ik}(t) e^{x_{ik} \gamma + \xi_{ik}} \right) 
\label{eq:haz}
\end{equation}
where $\beta$ is the effect of individual treatment $x_{ij}$, 
$\eta_{ij}$ is an individualistic susceptibility coefficient, 
$\alpha$ is the force of infection from outside the cluster, 
$\gamma$ is the infectiousness effect of the treatment $x_{ik}$ assigned to $k$ 
and $\xi_{ik}$ is an individualistic infectiousness coefficient for $k$.  The sum over $k$ in \eqref{eq:haz} does not include $k=j$ because $j$ cannot infect themselves.  Under this structural model, the susceptibility effect of interest \eqref{eq:shr} has a simple time-invariant form: $\SHR_{ij}(t,\x_{i(j)},\h_{i(j)}) = 
e^\beta$.

The structural transmission model \eqref{eq:haz} formalizes intuition about how interference arises for infectious disease outcomes.  The hazard of infection experienced by subject $j$ at time $t$ is a function of subject $k$'s features ($x_{ik}$ and $\xi_{ik}$) only when $k$ is currently infected ($y_{ik}(t)=1$).  As in \citet{hudgens2008toward}, the structural transmission model \eqref{eq:haz} obeys ``partial interference'' \citep{sobel2006randomized,halloran1991study,halloran1995causal}: the infection outcome for subject $j$ in cluster $i$ may depend on treatments and infection outcomes of other individuals in cluster $i$, but does not depend on subjects in clusters other than $i$.  Variations on this infection hazard model \eqref{eq:haz} have been used to model sources of disease transmission and for estimation of covariate effects on infection risk \citep{rhodes1996counting,auranen2000transmission,cauchemez2004bayesian,cauchemez2006investigating,kenah2013non,kenah2015semi,tsang2018transmissibility}, and as a conceptual model to evaluate the properties of risk ratios under contagion \citep{morozova2018risk}.  Figure \ref{fig:illustration} shows a schematic illustration of the transmission hazard model \eqref{eq:haz} for a cluster $i$ of size $n_i=4$ in which two subjects are treated.




\begin{figure}[t]
  \centering
  \includegraphics[width=\textwidth]{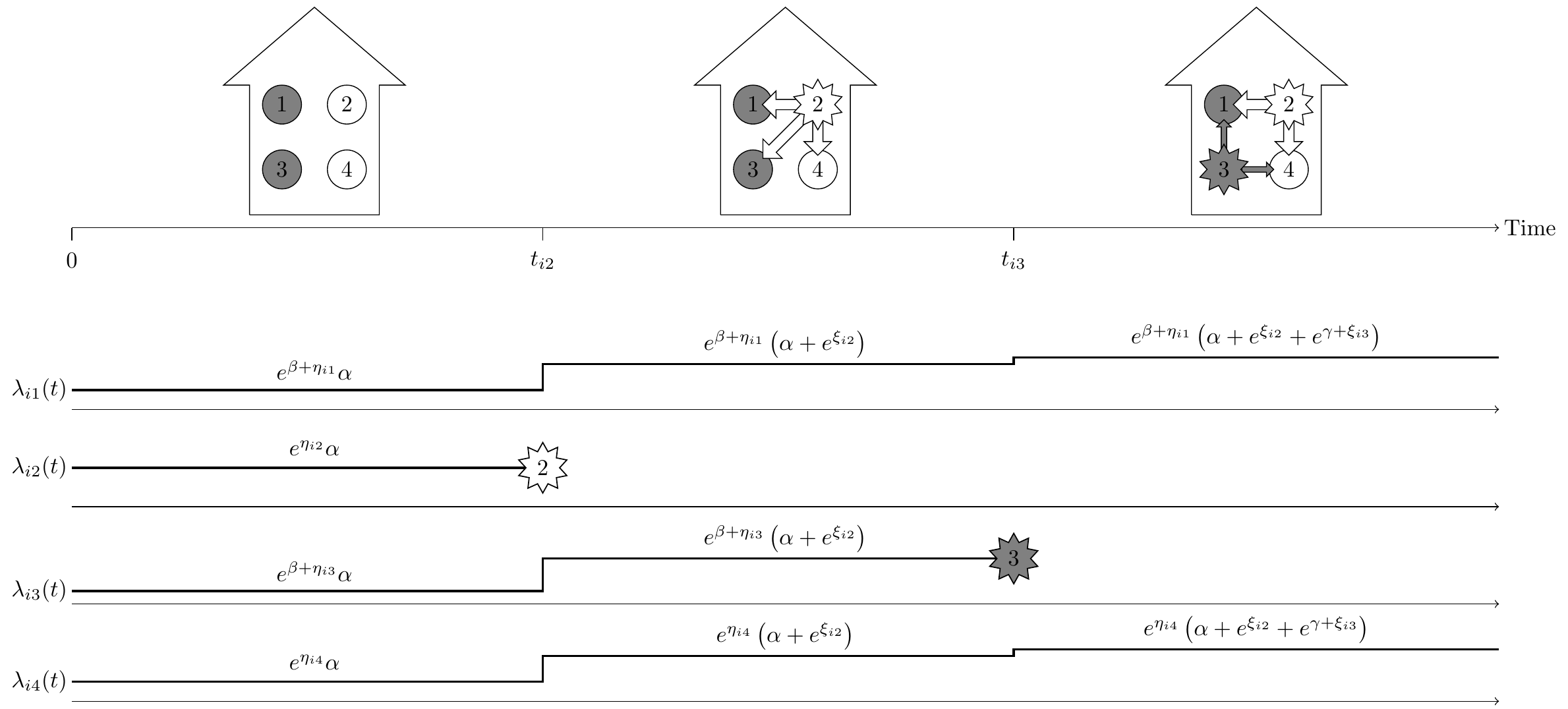}
  \caption{Illustration of the infectious disease transmission process and hazards \eqref{eq:haz} in a cluster $i$ of size $n_i=4$, where individuals 1 and 3 are treated (gray shading), while 2 and 4 are not.  Circles indicate susceptible individuals, and stars indicate infected individuals; arrows represent risk of transmission.  At time $t_{i2}$, subject 2 becomes infected, and thereafter transmits infection risk to 1, 3, and 4. Subsequently 3 becomes infected, and 1 and 4 are exposed to infection risk from both 2 and 3. The magnitude of this infection risk is related to the treatment status of the susceptible and infectious individuals.  At bottom, hazards of infection $\lambda_{ij}(t)$ are shown over time for each subject. The ``susceptibility'' effect of treatment is $\beta$, and the ``infectiousness'' effect is $\gamma$. 
  }
  \label{fig:illustration}
\end{figure}


\section{Results}

\subsection{DE under the null hypothesis of no susceptibility effect} 

If the direct effect is to serve as a useful estimand for researchers interested in learning about the causal effect of the intervention on the subject who receives it, we should expect that $DE_{ij}(t)=0$ when $\beta=0$, since the treatment has no effect on the infection risk of an individual who receives it. 
We begin by studying the properties of the average individual direct effect 
$DE_{ij}(\mathcal{T})$ under the three randomization designs.  We assume that the exogenous (community) force of infection $\alpha$ is positive, and 
$\mathcal{T}>0$ is a follow-up time at which infection outcomes are measured, so that at least one infection in each cluster arises with positive probability. 

Bernoulli randomization gives concordance between $\beta=0$ and the direct effect. 

\begin{prop}[DE under Bernoulli randomization]
Suppose $\beta=0$ and treatment assignment is Bernoulli randomized. Then $DE_{ij}(\mathcal{T})=0$.  \label{prop:bernoulli}
\end{prop}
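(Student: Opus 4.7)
The plan is to reduce $DE_{ij}(\mathcal{T})$ to a weighted sum in which both terms agree pointwise and the weights match. Starting from the definition of the individual average potential outcome in \eqref{eq:iapo},
\[
\overline{Y}_{ij}(\mathcal{T}, x) = \sum_{\x_{i(j)} \in \mathcal{X}^{n_i - 1}} \overline{Y}_{ij}(\mathcal{T}, x, \x_{i(j)}) \, \Pr(\X_{i(j)} = \x_{i(j)} \mid X_{ij} = x).
\]
Under Bernoulli randomization the coordinates of $\X_i$ are independent, so the conditional probability collapses to $\Pr(\X_{i(j)} = \x_{i(j)})$, which does not depend on $x$. It therefore suffices to establish that for every fixed $\x_{i(j)}$ the conditional expected potential outcome $\overline{Y}_{ij}(\mathcal{T}, x, \x_{i(j)})$ is the same for $x=0$ and $x=1$ whenever $\beta = 0$.

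For this I would run a coupling argument on the cluster-level transmission process. Fix $\x_{i(j)}$ and construct two trajectories, one with $x_{ij}=0$ and one with $x_{ij}=1$, from a common collection of unit-rate Poisson clocks (one per ordered subject pair for within-cluster transmission, plus an external-source clock per subject, thinned by the hazard rates from \eqref{eq:haz}). Two observations then force the two trajectories to agree on the random interval $[0, T_{ij}]$. First, when $\beta = 0$ the hazard for $j$ reduces to $\lambda_{ij}(t,\x_i,\h_{i(j)}) = e^{\eta_{ij}}\bigl(\alpha + \sum_{k\neq j} y_{ik}(t) e^{x_{ik}\gamma + \xi_{ik}}\bigr)$, which does not involve $x_{ij}$. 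Second, for any other subject $k \neq j$, the treatment $x_{ij}$ enters $\lambda_{ik}$ only through the summand $y_{ij}(t) e^{x_{ij}\gamma + \xi_{ij}}$, which vanishes while $j$ is susceptible, i.e.\ throughout $[0, T_{ij}]$. Together these imply that the joint law of the cluster's infection times up to and including $T_{ij}$ is invariant under flipping $x_{ij}$, giving $\overline{Y}_{ij}(\mathcal{T}, 1, \x_{i(j)}) = \overline{Y}_{ij}(\mathcal{T}, 0, \x_{i(j)})$ for every $\x_{i(j)}$.

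Combining the two steps yields $\overline{Y}_{ij}(\mathcal{T}, 1) = \overline{Y}_{ij}(\mathcal{T}, 0)$, and hence $DE_{ij}(\mathcal{T}) = 0$. The main technical subtlety is the second step: because $T_{ij}$ is a random stopping time, the informal slogan ``$x_{ij}$ does not influence the process up to $T_{ij}$'' needs an honest coupling so that the two sample paths provably coincide on the random interval $[0, T_{ij}]$ before potentially diverging afterwards, where $x_{ij}$ may indeed act through the infectiousness coefficient $\gamma$ but no longer affects $T_{ij}$ or $Y_{ij}(\mathcal{T})$. Note that this argument does not use any special feature of Bernoulli randomization beyond the cancellation of the conditional treatment weights, which foreshadows why the same conclusion can fail once block or cluster randomization introduces dependence between $X_{ij}$ and $\X_{i(j)}$.
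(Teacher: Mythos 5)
Your proposal is correct and follows essentially the same route as the paper: both reduce the claim to showing $\overline{Y}_{ij}(\mathcal{T},1,\x_{i(j)})=\overline{Y}_{ij}(\mathcal{T},0,\x_{i(j)})$ for each fixed $\x_{i(j)}$ via a coupling in which the two trajectories coincide up to and including $T_{ij}$ (because with $\beta=0$ the flipped treatment $x_{ij}$ enters the hazards only through the term $y_{ij}(t)e^{x_{ij}\gamma+\xi_{ij}}$, which vanishes while $j$ is susceptible), and then use the independence of coordinates under Bernoulli randomization to collapse the conditional weights to marginal ones. The only difference is cosmetic: the paper instantiates the coupling with the sequential shared-uniform waiting-time construction of Algorithm \ref{alg:coupling} rather than thinned Poisson clocks.
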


\noindent In contrast, the direct effect has the opposite sign as the infectiousness effect $\gamma$ when $\beta=0$ under block randomization. 

\begin{prop}[DE under block randomization]
Suppose $\beta=0$ and treatment assignment is block-randomized. 
If $\gamma<0$ then $DE_{ij}(\mathcal{T})>0$; if $\gamma=0$ then $DE_{ij}(\mathcal{T})=0$; 
and if $\gamma>0$ then $DE_{ij}(\mathcal{T})<0$. 
\label{prop:block}
\end{prop}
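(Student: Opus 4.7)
Because the sum in \eqref{eq:haz} runs over $k\neq j$, the hazard $\lambda_{ij}(t,\x_i,\h_{i(j)})$ depends on the subject's own treatment $x_{ij}$ only through the multiplicative factor $e^{x_{ij}\beta}$. When $\beta=0$ this factor disappears, and moreover the hazards of the other $k\neq j$ depend on $x_{ij}$ only through $y_{ij}(t)e^{x_{ij}\gamma+\xi_{ij}}$, which vanishes until $j$ is itself infected. Hence the distribution of the infection time $T_{ij}$ is independent of $x_{ij}$ once $\x_{i(j)}$ is fixed, and we may write $\overline{Y}_{ij}(\mathcal{T},1,\x_{i(j)})=\overline{Y}_{ij}(\mathcal{T},0,\x_{i(j)})=:\overline{Y}_{ij}(\mathcal{T},\x_{i(j)})$. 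Under block randomization, conditioning on $X_{ij}=1$ makes $\X_{i(j)}$ uniform over binary $(n_i-1)$-vectors with $m_i-1$ ones, and conditioning on $X_{ij}=0$ makes it uniform over those with $m_i$ ones, so that
\[
DE_{ij}(\mathcal{T}) \;=\; \E\bigl[\overline{Y}_{ij}(\mathcal{T},\X^{(m_i-1)}_{i(j)})\bigr] \;-\; \E\bigl[\overline{Y}_{ij}(\mathcal{T},\X^{(m_i)}_{i(j)})\bigr],
\]
where $\X^{(r)}_{i(j)}$ denotes a uniformly chosen $(n_i-1)$-vector with exactly $r$ ones.

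The plan is to reduce the proposition to a monotonicity statement: if $\x_{i(j)}\preceq\x'_{i(j)}$ componentwise, then $\overline{Y}_{ij}(\mathcal{T},\x'_{i(j)})\le\overline{Y}_{ij}(\mathcal{T},\x_{i(j)})$ when $\gamma<0$, with the reversed inequality for $\gamma>0$ and equality for $\gamma=0$. I prove this via a Sellke-type coupling. Attach i.i.d.\ unit-rate exponential thresholds $\{E_{ik}\}_{k=1}^{n_i}$, and in each system declare individual $k$ infected at the first time its cumulative hazard reaches $E_{ik}$. Running the two processes in parallel under a common $\{E_{ik}\}$, I induct on the order of infection events: before any infection both systems have identical hazards $\alpha e^{\eta_{ik}}$, so the first infection occurs at the same time and for the same individual $k^*$; thereafter $k^*$ contributes $e^{x_{ik^*}\gamma+\xi_{ik^*}}$ to each susceptible hazard in the $\x$-system and $e^{x'_{ik^*}\gamma+\xi_{ik^*}}$ in the $\x'$-system, and $x'_{ik^*}\ge x_{ik^*}$ together with $\gamma<0$ ensures the $\x'$-contribution is no larger. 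The inductive hypothesis ``every susceptible's cumulative hazard in $\x'$ is dominated pointwise by its counterpart in $\x$'' is therefore preserved at each infection event, giving $T_{ik}^{(\x')}\ge T_{ik}^{(\x)}$ almost surely and hence $Y_{ij}^{(\x')}(\mathcal{T})\le Y_{ij}^{(\x)}(\mathcal{T})$.

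To lift this pointwise monotonicity to the two averages, couple the randomizations themselves: draw $\X^{(m_i-1)}_{i(j)}$ uniformly, independently select one of its zero coordinates uniformly at random, and flip it to one to produce $\X^{(m_i)}_{i(j)}$. A brief counting argument shows the resulting marginal is uniform over $(n_i-1)$-vectors with $m_i$ ones, and by construction $\X^{(m_i-1)}_{i(j)}\preceq\X^{(m_i)}_{i(j)}$. Combining the two couplings and taking expectations gives $DE_{ij}(\mathcal{T})\ge 0$ when $\gamma<0$, with the reversed weak inequality for $\gamma>0$ and equality for $\gamma=0$.

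The main obstacle is promoting these weak inequalities to the strict ones stated in the proposition when $\gamma\neq 0$. I expect to argue as follows: because $\alpha>0$, the flipped coordinate $\ell$ is infected strictly before $\min(T_{ij},\mathcal{T})$ with positive probability, and on that event the contribution of $\ell$ to $j$'s hazard differs between the two coupled systems by a strictly positive amount on an interval of positive length. Since $E_{ij}$ has a density, this hazard gap translates into a positive-probability event on which $T_{ij}^{(\x^{(m_i)})}>\mathcal{T}\ge T_{ij}^{(\x^{(m_i-1)})}$, yielding the strict inequality. The only subtlety is ruling out cancellation across allocations, which is handled by the symmetry of the randomization-level coupling: every pair in the coupling contributes a nonnegative increment and at least one contributes a strictly positive one.
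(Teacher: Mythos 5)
Your argument is correct and reaches the paper's conclusion by the same overall skeleton — reduce $DE_{ij}(\mathcal{T})$ to comparisons of $j$'s outcome under two allocations to $\x_{i(j)}$ that differ by treating exactly one additional subject, establish stochastic dominance of $j$'s infection outcome via a coupling, then average — but the two key ingredients are realized differently. For the dominance step the paper builds an order-preserving coupling (Algorithm \ref{alg:coupling}): common uniforms drive inverse-CDF waiting times and a multinomial choice of the next infectee, so the two processes share the same infection \emph{order} and one compares the sums of waiting times directly; you instead use a Sellke threshold coupling with shared exponential thresholds and an induction showing pointwise domination of every susceptible's cumulative hazard, which does not force a common infection order but yields almost-sure domination of all infection times. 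Both constructions are valid here and both lean on $\beta=0$ (in the paper, so that the "who is next" probabilities agree across systems; in yours, so that $j$'s own treatment and the differing coordinate's susceptibility drop out); your version proves the slightly more general statement that $\overline{Y}_{ij}(\mathcal{T},\x_{i(j)})$ is monotone in the componentwise order on $\x_{i(j)}$, of which only the single-flip case is needed. For the averaging step, your "flip a uniformly chosen zero to a one" coupling of the two conditional randomization distributions is exactly the probabilistic restatement of the paper's combinatorial decomposition via the identity \eqref{eq:binom} and the sets $\mathcal{P}_i(\z)$ in \eqref{eq:partner}--\eqref{eq:decompose}, so nothing is gained or lost there. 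The one place your writeup is thinner than it should be is the promotion of weak to strict inequalities: you flag it and sketch the right idea (positive probability that the flipped subject is infected before $j$ and before $\mathcal{T}$, plus the density of $j$'s threshold, gives a positive-probability event where the two coupled outcomes at time $\mathcal{T}$ actually differ), and this does go through because the threshold $E_{ij}$ is independent of the pre-$T_{ij}$ dynamics; the paper's own strictness argument is comparably brief, so I would not call this a gap, but it is the step you should write out in full.
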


\noindent The direct effect has the same sign as $\gamma$ when $\beta=0$ under cluster randomization.  

\begin{prop}[DE under cluster randomization]
Suppose $\beta=0$ and treatment assignment is cluster randomized. 
If $\gamma<0$ then $DE_{ij}(\mathcal{T})<0$; if $\gamma=0$ then $DE_{ij}(\mathcal{T})=0$; 
and if $\gamma>0$ then $DE_{ij}(\mathcal{T})>0$. 
\label{prop:cluster}
\end{prop}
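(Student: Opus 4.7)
The plan starts from the observation that cluster randomization makes the conditional distribution in \eqref{eq:iapo} a point mass: $\Pr(\X_{i(j)}=(x,\ldots,x)\mid X_{ij}=x)=1$. Writing $\mathbf{1}$ and $\mathbf{0}$ for the all-treated and all-untreated allocations,
\[ DE_{ij}(\mathcal{T}) = \Pr\bigl(T_{ij}(\mathbf{1}) < \mathcal{T}\bigr) - \Pr\bigl(T_{ij}(\mathbf{0}) < \mathcal{T}\bigr). \]
Setting $\beta=0$ makes the susceptibility factor $e^{x_{ij}\beta}$ in \eqref{eq:haz} equal to $1$, so the two hazards differ only through the infectiousness factors $e^{x_{ik}\gamma}$. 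When $\gamma=0$ these hazard processes coincide in distribution and $DE_{ij}(\mathcal{T})=0$ immediately, so the remaining case is $\gamma \neq 0$.

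For $\gamma > 0$ (the $\gamma < 0$ case is symmetric), I would realize both epidemics on a single probability space by a Sellke-style construction. Decompose the additive hazard \eqref{eq:haz} into independent competing sources: for each $j$ draw an exogenous clock $E_{ij}\sim\text{Exp}(\alpha e^{\eta_{ij}})$, and for each ordered pair $k\neq j$ draw $U_{kj}\sim\text{Unif}(0,1)$. Set the post-infection transmission delay from $k$ to $j$ under allocation $\x_i$ to $C_{kj}(\x_i) = -\log(U_{kj})\,e^{-\eta_{ij}-x_{ik}\gamma-\xi_{ik}}$, so that $(E_{ij},U_{kj})$ are shared across both scenarios and $C_{kj}(\mathbf{1}) < C_{kj}(\mathbf{0})$ almost surely. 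The infection-time vectors under $\mathbf{1}$ and $\mathbf{0}$ are then produced deterministically from these shared clocks by the recursion $T_{ij}(\x_i) = \min\bigl(E_{ij},\ \min_{k:\,\tau_k(\x_i)<T_{ij}(\x_i)}\tau_k(\x_i)+C_{kj}(\x_i)\bigr)$, where $\tau_k(\x_i)$ is $k$'s realized infection time.

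The central step is the deterministic monotonicity $T_{ij}(\mathbf{1}) \le T_{ij}(\mathbf{0})$ for every $j$, proved by induction on the ordered union of all jump times with inductive hypothesis $I^{(0)}(t) \subseteq I^{(1)}(t)$, where $I^{(s)}(t)$ denotes the infected set at $t$ under scenario $s$. Consider the next $\mathbf{0}$-event, where $j^*$ is infected at $t^*$ either exogenously or via some $k^*\in I^{(0)}(t^*-)$. The shared $E_{ij^*}$ handles the exogenous case; in the transmission case, the inclusion gives $\tau_{k^*}(\mathbf{1}) \le \tau_{k^*}(\mathbf{0})$, and the coupling $C_{k^*j^*}(\mathbf{1}) < C_{k^*j^*}(\mathbf{0})$ forces $\tau_{k^*}(\mathbf{1})+C_{k^*j^*}(\mathbf{1}) \le t^*$, so $j^*$ is infected under $\mathbf{1}$ by $t^*$ and the inclusion is preserved. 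Pointwise domination then yields $\Pr(T_{ij}(\mathbf{1})<\mathcal{T}) \ge \Pr(T_{ij}(\mathbf{0})<\mathcal{T})$. For strict inequality I would exhibit a positive-probability event on which the two infection times straddle $\mathcal{T}$: since $\alpha>0$, with positive probability some $k\neq j$ has $E_{ik}=s<\mathcal{T}$ while every other exogenous clock and every pair-delay besides $C_{kj}$ is so large that the direct $k\to j$ transmission is the only potential pathway to infect $j$ by $\mathcal{T}$, and $U_{kj}$ falls in the non-degenerate interval giving $s+C_{kj}(\mathbf{1}) \le \mathcal{T} < s+C_{kj}(\mathbf{0})$, an interval of positive Lebesgue measure precisely because $\gamma>0$.

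The main obstacle is the bookkeeping in the inductive step: the two realizations generally infect individuals in different orders, so the hypothesis must be phrased as a set inclusion rather than a one-to-one pairing of infection events, and every $\mathbf{0}$-event must be matched to a weakly earlier $\mathbf{1}$-event driven by the appropriately coupled clocks. Once that set-inclusion argument is in hand, the stochastic dominance and strict-sign conclusions follow essentially by construction, and the $\gamma<0$ case is obtained by exchanging the roles of $\mathbf{1}$ and $\mathbf{0}$.
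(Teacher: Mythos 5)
Your proposal is correct, and the first reduction is identical to the paper's: under cluster randomization the conditional distribution in \eqref{eq:iapo} degenerates, so $DE_{ij}(\mathcal{T})$ is exactly the contrast $\overline{Y}_{ij}(\mathcal{T},\mathbf{1})-\overline{Y}_{ij}(\mathcal{T},\mathbf{0})$. Where you diverge is in the coupling used to compare these two scenarios. The paper reuses the coupling of Algorithm \ref{alg:coupling}: a single uniform $U_l$ per infection event generates both waiting times via the inverse CDFs $F_l^{-1}$ and $G_l^{-1}$, and the key structural fact exploited is that $\beta=0$ makes the identity of the next infected subject independent of treatment, so the two coupled epidemics have the \emph{same order} of infections and the comparison reduces to termwise ordering of $F_l^{-1}(U_l)$ versus $G_l^{-1}(U_l)$. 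You instead build a Sellke/first-passage-percolation representation with shared per-individual exogenous clocks and per-pair transmission delays, which reproduces the law of the hazard model \eqref{eq:haz} by the usual competing-exponentials decomposition, and you obtain pointwise monotonicity of infection times by a set-inclusion induction (equivalently, every transmission path is weakly shorter under $\mathbf{1}$ when $\gamma>0$). Your construction does not need the infection orders to agree across scenarios, which makes it more robust for componentwise-ordered allocations like $\mathbf{1}$ versus $\mathbf{0}$ and arguably more standard; the paper's order-preserving coupling is less elementary here but has the advantage that the same device also handles the block-randomization comparison (Proposition \ref{prop:block}), where the two allocations differ by a treatment \emph{swap} and are not componentwise ordered, so a percolation-style monotone coupling would not directly apply. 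Your argument for strictness (a positive-probability event on which the coupled infection times straddle $\mathcal{T}$) is more explicit than the paper's, which asserts the strict inequality of expectations directly from the strict ordering of the waiting-time CDFs; both implicitly require $n_i\ge 2$ so that within-cluster transmission to $j$ occurs with positive probability.
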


Propositions \ref{prop:bernoulli}-\ref{prop:cluster} compare averaged expectations of infection outcomes for subject $j$ in cluster $i$. However, computing the expectation $\overline{Y}_{ij}(t,x,\x_{i(j)})$ for particular values of $x$ and $\x_{i(j)}$ is intractable, so an explicit comparison of average individual potential infection outcomes under different treatment allocations cannot be made analytically. Instead, we will use tools from the theory of probabilistic coupling \citep{den2012probability,ross1996stochastic} to exhibit stochastic dominance relations between infection outcomes under different treatment allocations to facilitate the comparison.  

\begin{defn}[Coupling]
A coupling of two random variables $Y^0$ and $Y^1$ both taking values in 
$(\Omega,\mathcal{F})$ is any pair of random variables 
$(\tilde Y^0,\tilde Y^1)$ taking values in 
$(\Omega\times \Omega, \mathcal{F}\otimes\mathcal{F})$ whose marginal distributions are identical to those of $Y^0$ and $Y^1$ respectively, i.e. $Y^0 \overset{d}{=} \tilde Y^0$ 
and $Y^1 \overset{d}{=} \tilde Y^1$.  
\label{defn:coupling}
\end{defn}

Typically the variables $\tilde Y^0$ and $\tilde Y^1$ are dependent. To study the relationship of infection outcomes under different treatment scenarios, a notion of dominance will be necessary. 

\begin{defn}[Stochastic dominance]
The real-valued random variable $Y^1$ stochastically dominates $Y^0$ if 
$\Pr(Y^1<y) \le \Pr(Y^0<y)$ for all $y\in\R$. 
\end{defn}

If $Y^1$ stochastically dominates $Y^0$ and vice versa, the variables are equal in distribution.  If $Y^1$ stochastically dominates $Y^0$, then $\E[Y^1] \ge \E[Y^0]$.  The following Lemma, proved by e.g. \citet[][pages 409--410]{ross1996stochastic}, provides a framework for establishing stochastic dominance through the construction of a coupling. 

\begin{lem}[Coupling and stochastic dominance]
The real-valued random variable $Y^1$ stochastically dominates $Y^0$ if and 
only if there is a coupling $(\tilde Y^0$, $\tilde Y^1)$ of $Y^0$ and $Y^1$ 
such that $\Pr(\tilde Y^1 \ge \tilde Y^0) = 1$.
\label{lem:dominance}
\end{lem}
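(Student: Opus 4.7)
The plan is to handle the two implications separately, using standard probabilistic coupling machinery. For the easy direction: if $(\tilde Y^0,\tilde Y^1)$ is a coupling with $\tilde Y^1 \ge \tilde Y^0$ almost surely, then for every $y \in \R$ the event $\{\tilde Y^1 < y\}$ is almost surely contained in $\{\tilde Y^0 < y\}$, and hence
\[\Pr(Y^1 < y) = \Pr(\tilde Y^1 < y) \le \Pr(\tilde Y^0 < y) = \Pr(Y^0 < y),\]
which is precisely the defining inequality of stochastic dominance.

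For the converse, I would use the classical quantile-transform construction. Let $F_0$ and $F_1$ denote the cumulative distribution functions of $Y^0$ and $Y^1$, and define the generalized inverse $F_i^{-1}(u) = \inf\{y : F_i(y) \ge u\}$ for $u \in (0,1)$. Draw a single uniform variable $U \sim \mathrm{Uniform}(0,1)$ and set $\tilde Y^i := F_i^{-1}(U)$ for $i = 0, 1$. The standard probability-integral-transform identity $F_i^{-1}(u) \le y \iff u \le F_i(y)$ shows that each $\tilde Y^i$ has CDF $F_i$, so $(\tilde Y^0,\tilde Y^1)$ is a coupling of $Y^0$ and $Y^1$ in the sense of Definition~\ref{defn:coupling}. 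The heart of the argument is then to show the pointwise comparison $F_1^{-1}(u) \ge F_0^{-1}(u)$ for every $u \in (0,1)$: if $y$ lies in the defining set $\{y' : F_1(y') \ge u\}$ then $F_0(y) \ge F_1(y) \ge u$, so $y$ also lies in the defining set for $F_0^{-1}(u)$. The infimum over the smaller set is at least the infimum over the larger one, yielding $F_1^{-1} \ge F_0^{-1}$ on $(0,1)$, and substituting $U$ gives $\tilde Y^1 \ge \tilde Y^0$ almost surely.

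The only real obstacle, if one can call it that, is reconciling the strict-inequality statement of stochastic dominance, $\Pr(Y^1 < y) \le \Pr(Y^0 < y)$, with the right-continuous CDFs $F_i(y) = \Pr(Y^i \le y)$ used in the quantile construction. This is handled by the limiting identity $\Pr(Y^i \le y) = \lim_{\epsilon \downarrow 0}\Pr(Y^i < y + \epsilon)$, which upgrades the hypothesis to $F_1(y) \le F_0(y)$ for all $y \in \R$; after this small bookkeeping step, the monotone comparison of generalized inverses described above goes through cleanly and no further technicalities arise.
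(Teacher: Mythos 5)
Your proof is correct and follows the standard argument for this classical result: the paper does not prove the lemma itself but defers to Ross (1996, pp.\ 409--410), whose proof is precisely your quantile-transform coupling for the forward construction and the trivial event-inclusion argument for the converse. Your extra care in converting the paper's $\Pr(Y^1<y)\le\Pr(Y^0<y)$ convention into the right-continuous CDF inequality is a correct and worthwhile bookkeeping step, but the route is essentially the same as the cited one.
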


To begin proving Propositions \ref{prop:bernoulli}-\ref{prop:cluster}, define the vectors of stochastic potential outcomes of all subjects under two different joint treatments allocations $\x_i^1$ and $\x_i^0$ as  $\Y_i(t,\x_i^1)=\left(Y_{i1}(t,\x_i^1), \ldots, Y_{in_i}(t,\x_i^1)\right)$ and $\Y_i(t,\x_i^0)=\left(Y_{i1}(t,\x_i^0), \ldots, Y_{in_i}(t,\x_i^0)\right)$.  Corresponding to these potential outcomes, we will construct two coupled outcome processes with $\beta=0$, denoted $\tilde \Y_i^1(t)$ and $\tilde \Y_i^0(t)$, under treatment vectors $\x_i^1$ and $\x_i^0$ respectively.  The order of infections in both processes is the same, but the times of infection may be different.

Let $S_l$ and $I_l$ be the set of subjects that are susceptible and infectious, respectively, just before the $l$th infection.  Let $\tilde{W}_l^1$ and $\tilde{W}_l^0$ be the waiting times to the next infection in the coupled processes under $\x_i^1$ and $\x_i^0$ respectively.  
    Define the waiting time cumulative distribution functions 
   \[ F_l(w) = 1-\exp\left[ -w \sum_{a\in S_l} e^{\eta_{ia}} \left( \alpha + \sum_{b\in I_l} e^{\gamma x_{ib}^1 + \xi_{ib}}\right)\right] \]
  and 
  \[ G_l(w) =  1-\exp\left[ -w \sum_{a\in S_l} e^{\eta_{ia}} \left( \alpha + \sum_{b\in I_l} e^{\gamma x_{ib}^0 + \xi_{ib}}\right)\right], \]
  where sums over empty sets are interpreted as zero.  Let $\tilde T_{il}^1$ and $\tilde T_{il}^0$ be the time of infection of subject $l$ under treatments $\x_i^1$ and $\x_i^0$ respectively, with $\tilde T_{i0}^1=\tilde T_{i0}^0=0$. Likewise define the corresponding infection indicators $\tilde Y_{il}^1(t) = \indicator{\tilde T_{il}^1<t}$ and $\tilde Y_{il}^0(t) = \indicator{\tilde T_{il}^0<t}$.  The following algorithm constructs the joint outcome functions 
$\tilde\Y_i^1(t)=(\tilde Y_{i1}^1(t), \ldots, \tilde Y_{in_i}^1(t))$ and  
$\tilde\Y_i^0(t)=(\tilde Y_{i1}^0(t), \ldots, \tilde Y_{in_i}^0(t))$
under treatment vectors $\x_i^1$ and $\x_i^0$ respectively.  We show below that $(\tilde \Y_i^1(t),\tilde \Y_i^0(t))$ is a coupling of the potential infection outcomes $\Y_i(t,\x^1)$ and $\Y_i(t,\x^0)$.

\begin{algorithm}[H]
  \begin{algorithmic}[0] 
    \State $S_1 \gets \{1, \ldots, n_i \}$  \Comment{Initialize susceptibles}
    \State $I_1 \leftarrow \emptyset$ \Comment{Initialize infectives} 
    \For{$l\gets 1,\ldots,n_i$}
      \State $U_l \sim \text{Uniform}(0,1)$ 
      \State $\tilde W_l^1 \gets F_l^{-1}(U_l)$ \Comment{Set waiting times to next infection} 
      \State $\tilde W_l^0 \gets G_l^{-1}(U_l)$
      \For{$v \in S_l$}  \Comment{Set subject infection probabilities}
         \State $p_v \gets e^{\eta_{iv}}/\sum_{a\in S_l} e^{\eta_{ia}}$
      \EndFor
      \State $V_l \sim \text{Multinomial}(S_l, \{p_v:\ v\in S_l\})$ \Comment{Choose next infected subject} 
      \State $\tilde T_{iV_l}^1 \gets \tilde T_{iV_{l-1}}^1 + \tilde W_l^1$ \Comment{Set infection time}
      \State $\tilde T_{iV_l}^0 \gets \tilde T_{iV_{l-1}}^0 + \tilde W_l^0$
      \State $\tilde Y_{iV_l}^1(t) \gets \indicator{\tilde T_{iV_l}^1<t}$ \Comment{Define infection indicator function}
      \State $\tilde Y_{iV_l}^0(t) \gets \indicator{\tilde T_{iV_l}^0<t}$
      \State $S_l \gets S_{l-1} \setminus \{ V_l \}$ \Comment{Update susceptibles}
      \State $I_l \gets I_{l-1} \cup \{ V_l \}$ \Comment{Update infectives}
    \EndFor

\end{algorithmic}
\caption{Construction of the coupling for cluster $i$.}
\label{alg:coupling}
\end{algorithm}

Algorithm \ref{alg:coupling} generates two sets of infection outcomes, one corresponding to the joint treatment $\x_i^1$ and one to the joint treatment $\x_i^0$, by constructing waiting times to infection of each subject, and which subject is infected at each step.  The key insight is that under the infection hazard model \eqref{eq:haz}, the waiting times $\tilde{W}_{il}^1$ and $\tilde{W}_{il}^0$ depend on treatments of already-infected individuals, but because $\beta=0$, selection of the next infected individual does not depend on treatments of yet-uninfected subjects.  This fact permits construction of two dependent infection processes whose timing differs, but where the order of infections is identical.

\begin{lem}[Construction of the coupling]
  When $\beta=0$, the variables $(\tilde\Y_i^1(t),\tilde\Y_i^0(t))$ constructed by Algorithm \ref{alg:coupling} constitute a coupling of the potential infection outcomes $\Y_i(t,\x^1)$ and $\Y_i(t,\x^0)$.
\label{lem:coupling}
\end{lem}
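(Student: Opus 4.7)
The plan is to verify that the joint process produced by Algorithm \ref{alg:coupling} satisfies Definition \ref{defn:coupling}. That definition requires only that the two marginals have the correct laws; the joint distribution is otherwise unconstrained. Since $\tilde\Y_i^1$ and $\tilde\Y_i^0$ play symmetric roles, it suffices to argue $\tilde\Y_i^1(t)\overset{d}{=}\Y_i(t,\x^1)$.

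First I would invoke the standard competing-risks decomposition of a pure-jump process driven by the hazards \eqref{eq:haz}. Conditional on the current susceptible set $S$ and infective set $I$, the waiting time to the next infection is exponential with rate $\sum_{j\in S}\lambda_{ij}$, and, independently, the identity of the newly infected subject is a single categorical draw with probabilities proportional to the $\lambda_{ij}$. Specializing to $\beta=0$, the hazard for susceptible $j$ factors as $\lambda_{ij}=e^{\eta_{ij}}\bigl(\alpha+\sum_{k\in I}e^{x_{ik}\gamma+\xi_{ik}}\bigr)$, in which the parenthesized factor depends on treatments of infectives but not on the susceptible $j$. Consequently the total hazard equals $\bigl(\sum_{a\in S}e^{\eta_{ia}}\bigr)\bigl(\alpha+\sum_{b\in I}e^{x_{ib}\gamma+\xi_{ib}}\bigr)$ — exactly the rate encoded by $F_l$ when $\x_i=\x_i^1$ — and the probability that $v\in S$ is next infected simplifies to $p_v=e^{\eta_{iv}}/\sum_{a\in S}e^{\eta_{ia}}$, which is free of the treatment allocation.

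With these identifications the marginal check is essentially bookkeeping. The categorical step of the algorithm chooses $V_l$ with the correct probabilities $p_v$, and the inverse-CDF step $\tilde W_l^1=F_l^{-1}(U_l)$ returns an exponential variate with the correct total hazard under treatment $\x_i^1$; since the $U_l$ and the categorical draws are independent across $l$, the successive waiting times and event identities are independent, reproducing the law of the competing-risks construction. Hence $\tilde\Y_i^1(t)\overset{d}{=}\Y_i(t,\x^1)$, and the symmetric argument via $G_l$ gives $\tilde\Y_i^0(t)\overset{d}{=}\Y_i(t,\x^0)$.

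The main subtlety — and the only genuine obstacle — is that under the two treatment vectors the susceptible and infective sets at the $l$th jump could in principle differ, since infection \emph{times} generally do. The algorithm finesses this by using a common $V_l$ to determine the identity of each successive infection, so the sets $S_l$ and $I_l$ coincide across both coupled marginals even though $\tilde T_{iV_l}^1\ne\tilde T_{iV_l}^0$. Sharing $V_l$ is legitimate precisely because $\beta=0$ renders the conditional distribution of the next-infected subject free of the treatment allocation; without this hypothesis, a single $V_l$ could not simultaneously drive both marginals while preserving each marginal's law, which is why the construction — and indeed the subsequent stochastic-dominance comparison via Lemma \ref{lem:dominance} — is restricted to the null case.
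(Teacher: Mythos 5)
Your proposal is correct and follows essentially the same route as the paper: both hinge on the observation that when $\beta=0$ the hazard \eqref{eq:haz} factors so that the identity of the next infected subject has treatment-free probabilities $p_v=e^{\eta_{iv}}/\sum_{a\in S_l}e^{\eta_{ia}}$ while all treatment dependence is carried by the waiting-time distributions $F_l$ and $G_l$, which is exactly what legitimizes sharing $V_l$ across the two marginals. The only difference is presentational: the paper verifies the marginal law by an explicit likelihood computation (differentiating the joint mass/CDF of $(V_l,\tilde W_l^1)$ and telescoping the product over infection events into the likelihood of the original process), whereas you invoke the standard competing-risks decomposition of a piecewise-constant-hazard jump process as a known fact.
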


\begin{proof}[Proof of Lemma \ref{lem:coupling}] 
 We will show that $(\tilde\Y_i^1(t),\tilde\Y_i^0(t))$ is a coupling of $\Y_i(t,\x^1)$ and $\Y_i(t,\x^0)$ satisfying Definition \ref{defn:coupling}. 
First, the waiting time distribution functions $F_l(w)$ and $G_l(w)$ are 
monotonically increasing in $w$, so the random waiting time 
$\tilde W_l^1 = F_l^{-1}(U_l)$ has distribution function $F_l(w)$ and 
$\tilde W_l^0 = G_l^{-1}(U_l)$ has distribution function $G_l(w)$ 
\citep{devroye1986random}.  
Because the same uniform variable $U_l$ is used to generate both waiting times $\tilde{W}_l^1$ and $\tilde{W}_l^0$, these variables, and hence the infection times $\tilde T_{ij}^1$ and $\tilde T_{ij}^0$, and outcomes $\tilde Y_{ij}^1(t)$ and $\tilde Y_{ij}^0(t)$, are dependent. 
The joint mass function of the $l$th infected subject $V_l$ and the cumulative 
distribution function of the waiting time $\tilde W_l^1$ to this infection is, 
by construction,  
\begin{equation}
\begin{split} 
  \Pr(V_l=v,\tilde W_l^1<w) 
  &= \frac{e^{\eta_{iv}} }{\sum_{a\in S_l} e^{\eta_{ia}}} 
     \left[1-\exp\left[
       - w \sum_{a\in S_l} e^{\eta_{ia}} 
       \left( \alpha + \sum_{b\in I_l} e^{\gamma x_{ib}^1 + \xi_{ib}}\right) 
       \right]\right].
\end{split}
\label{eq:jointmasscdf}
\end{equation}
Differentiating \eqref{eq:jointmasscdf} with respect to $w$, we find that the 
joint likelihood of the newly infected subject $V_l=v$ and the waiting time $w$ to the $l$th 
infection is
\begin{equation}
\begin{split}
  \tilde L_{il}^1(v,w) &= e^{\eta_{iv}}\left( \alpha + \sum_{b\in I_l} e^{\gamma x_{ib}^1 + \xi_{ib}} \right) 
  \exp\left[- w \sum_{a\in S_l} e^{\eta_{ia}} 
    \left( \alpha + \sum_{b\in I_l} e^{\gamma x_{ib}^1 + \xi_{ib}}\right) 
      \right] \\
      &= e^{\eta_{iv}}\left( \alpha 
      + \sum_{b=1}^n \tilde y_{ib}^1(\tilde t_{iv}^1) e^{\gamma x_{ib}^1 + \xi_{ib}} \right)
      \exp\left[-w\sum_{a=1}^n (1-\tilde y_{ia}^1(\tilde t_{iv}^1)) e^{\eta_{ia}} 
        \left(\alpha + \sum_{b=1}^n \tilde y_{ib}^1(\tilde t_{iv}^1) e^{\gamma x_{ib}^1 + \xi_{ib}}\right)
      \right] \\
      &= \tilde \lambda_{iv}^1(\tilde t_{iv}^1) \exp\left[
  - w \sum_{a=1}^n (1-\tilde y_{ia}^1(\tilde t_{iv}^1)) \tilde \lambda_{ia}^1(\tilde t_{iv}^1) 
    \right]
\end{split}
\end{equation}
where $\tilde \lambda_{ij}^1(t)$ is \eqref{eq:haz} with $\tilde \y_{i}^1(t)$ and $\x_i^1$ replacing $\y_{i}(t)$ and $\x_i$ respectively.  
Let $\tilde L_{i}^1(\tilde\y_i^1)$ be the likelihood of the full realization of
$\tilde \y_i^1(t) = (\tilde y_{i1}^1(t),\ldots,\tilde y_{in_i}^1(t))$
with 
$\tilde \T^1 = (\tilde t_{i1}^1, \ldots, \tilde t_{in_i}^1)$, 
$\tilde \W^1 = (\tilde w_{i1}^1, \ldots, \tilde w_{in_i}^1)$, and 
$\tilde \V = (\tilde v_{i1}^1, \ldots, \tilde v_{in_i}^1)$. 
Recall that by construction, $\tilde w_{ik}^1 = \tilde t_{iv_k}^1 - \tilde t_{v_{i,k-1}}^1$.  
The likelihood of the constructed process is 
\begin{equation}
\begin{split}
  \tilde L(\tilde \y^1(t)) 
  &= \prod_{k=1}^{n_i} \tilde \lambda_{iv_k}^1(\tilde t_{iv_k}^1) 
  \exp\left[- w_k\sum_{j=1}^{n_i} (1-\tilde y_{ij}^1(\tilde t_{iv_k}^1))\tilde \lambda_{ij}^1(\tilde t_{iv_k}^1)\right] \\
  &= \left( \prod_{j=1}^{n_i} \tilde \lambda_{ij}^1(\tilde t_{ij}^1) \right)
    \exp\left[- \sum_{j=1}^{n_i} \sum_{k=1}^{n_i} 
    \int_{\tilde t_{iv_{k-1}}^1}^{\tilde t_{iv_k}^1} (1-\tilde y_{ij}^1(t)) \tilde \lambda_{ij}^1(t) \dx{t} \right] \\ 
    &= \left( \prod_{j=1}^{n_i} \tilde \lambda_{ij}^1(\tilde t_{ij}^1) \right)
    \exp\left[ -\sum_{j=1}^{n_i} \int_{0}^{\tilde t_{ij}^1} \tilde \lambda_{ij}^1(t) \dx{t} \right] \\
  &= L(\tilde \y^1(t))
\end{split}
\end{equation}
where $L(\tilde \y_i^1(t))$ is the likelihood 
of the original process. 
Therefore the constructed outcome vector $\tilde \Y_i(t,\x_i^1)$ is equal in distribution to the potential outcome vector $\Y_i(t,\x_i^1)$, and it follows that 
$\tilde Y_{ij}(t,\x_i^1)$ is equal in distribution to $Y_{ij}(t,\x_i^1)$.  By the same reasoning, 
$\tilde Y_{ij}(t,\x_i^0)$ is equal in distribution to $Y_{ij}(t,\x_i^0)$.  Therefore by Definition \ref{defn:coupling}, 
$(\tilde Y_{ij}(t,\x_i^1),\tilde Y_{ij}(t,\x_i^0))$ is a coupling of $Y_{ij}(t,\x_i^1)$ and 
$Y_{ij}(t,\x_i^0)$.  
\end{proof}

With this coupling, we can deduce stochastic order relations in infection outcomes under particular different joint treatments $\x_i^1$ and $\x_i^0$ when $\beta=0$.  The proofs of Propositions \ref{prop:bernoulli}-\ref{prop:cluster} exhibit these order relations, under the three randomization designs.  In each case we focus, without loss of generality, on a particular subject $j$. 

\begin{proof}[Proof of Proposition~\ref{prop:bernoulli}] 
  First, let $\x_i^1=(x_{ij}=1,\x_{i(j)})$ and $\x_i^0=(x_{ij}=0,\x_{i(j)})$ be joint treatment allocations that are identical except that for subject $j$, $x_{ij}^1=1$ and $x_{ij}^0=0$.  Then by Lemma \ref{lem:coupling}, $(\tilde\Y_i^1(t),\tilde\Y_i^0(t))$ is a coupling of $\Y_i(t,\x^1)$ and $\Y_i(t,\x^0)$ under $\beta=0$. Whenever $j$ is uninfected, $F_l(w) = G_l(w)$, so $\tilde{T}_{il}^1 = \tilde{T}_{il}^0$ and so $\tilde{Y}_{ij}^1(t) = \tilde{Y}_{ij}^0(t)$.  Therefore by Lemma \ref{lem:dominance} $Y_{ij}(t,\x_i^1)$ stochastically dominates $Y_{ij}(t,\x_i^0)$ and vice versa.  It follows that $Y_{ij}(t,\x_i^1)$ and $Y_{ij}(t,\x_i^0)$ are equal in distribution, so $\overline{Y}_{ij}(\mathcal{T},1,\x_{i(j)}) = \overline{Y}_{ij}(\mathcal{T},0,\x_{i(j)})$. Now consider a Bernoulli randomized treatment allocation $\X_{i(j)}$ to subjects other than $j$ in cluster $i$.  Under Bernoulli randomization, the distribution of $\X_{i(j)}$ is invariant to conditioning on $X_{ij}=x_{ij}$.  By the definition of the individual average potential infection outcome \eqref{eq:iapo}, 
\begin{equation*}
  \begin{split}
    \overline{Y}_{ij}(\mathcal{T},1) &= \sum_{\x_{i(j)}\in \mathcal{X}^{n_i-1}} \overline{Y}_{ij}(\mathcal{T},1,\x_{i(j)}) \prod_{k\neq j} p^{x_{ik}}(1-p)^{1-x_{ik}} \\
    &= \sum_{\x_{i(j)}\in \mathcal{X}^{n_i-1}} \overline{Y}_{ij}(\mathcal{T},0,\x_{i(j)}) \prod_{k\neq j} p^{x_{ik}}(1-p)^{1-x_{ik}} \\
    &= \overline{Y}_{ij}(\mathcal{T},0), 
  \end{split}
\end{equation*}
and so $DE_{ij}(\mathcal{T})=0$ as claimed.
\end{proof}

\noindent The proof of Proposition \ref{prop:block} proceeds similarly, but we evaluate differences in potential infection outcomes of subject $j$ when $j$ and $k\neq j$ have opposite treatments, with other subjects' treatments held constant.  

\begin{proof}[Proof of Proposition~\ref{prop:block}]
Let $\mathcal{X}^n_m$ be the set of all binary $n$-vectors with $m$ positive elements. 
  First, we deduce a stochastic order relation for a particular treatment allocation in which $j$ and $k$ have opposite treatments. 
Let $\z\in \mathcal{X}^{n_i-2}_{m_i-1}$ and for $j\neq k$ define $\x_i^1=(x_{ij}=1,x_{ik}=0,\x_{i(jk)}=\z)$ and $\x_i^0=(x_{ij}=0,x_{ik}=1,\x_{i(jk)}=\z)$. 
When $\gamma=0$, $F_l(w)=G_l(w)$ for all $l$ and all $w$. Therefore 
$\tilde T_{il}^1=\tilde T_{il}^0$ for all $l$ and so 
$\tilde Y_{ij}^1(t)=\tilde Y_{ij}^0(t)$ for all $t$.
Then $Y_{ij}(t,\x_i^1)$ is equal in distribution to $Y_{ij}(t,\x_i^0)$ for all $t$  
and so $\overline{Y}_{ij}(t,\x_i^1)=\overline{Y}_{ij}(t,\x_i^0)$.  
When $\gamma<0$, note that $\tilde Y_{ij}^1(t) \ge \tilde Y_{ij}^0(t)$ for all $t$ 
if and only if $\tilde T_{ij}^1 \le \tilde T_{ij}^0$.  
Suppose without loss of generality that in the coupled processes, subjects are relabeled in order of their infection in the constructed process, so the $j$th infection occurs in subject $j$, $v_j=j$. Likewise the $k$th infection occurs in subject $k$, so $v_k=k$.  
Two cases are of interest. First, when $j<k$ we have $x_{il}^1=x_{il}^0$ for every $l\le j$, and so $F_l(w)=G_l(w)$ for $l\le j<k$ and all $w$. 
Therefore, 
\begin{equation}
  \tilde T_{ij}^1 = \sum_{l=1}^{j} \tilde W_l^1 
              = \sum_{l=1}^{j} F_l^{-1}(U_l)
              = \sum_{l=1}^{j} G_l^{-1}(U_l) 
              = \sum_{l=1}^{j} \tilde W_l^0 
              = \tilde T_{ij}^0.
\end{equation}
Second, when subject $k$ is infected first, or $k<j$, we have $F_l(w)=G_l(w)$ 
for $l<k$.  However, for subjects $r$ infected after $k$ ($r>k$), we have 
\begin{equation}
\begin{split}
  F_r(w) &= 1-\exp\left\{ -w \sum_{a\in S_r} e^{\eta_{ia}} \left( \alpha + \sum_{b\in I_r} e^{\gamma x_{ib}^1 + \xi_{ib}}\right) \right\} \\
  &= 1-\exp\left\{ -w \sum_{a\in S_r} e^{\eta_{ia}} \left( \alpha + \sum_{\substack{b\in I_r\\ b\neq k}} e^{\gamma x_{ib}^0 + \xi_{ib}} + e^{\xi_{ik}} \right) \right\} \\
  &> 1-\exp\left\{ -w \sum_{a\in S_r} e^{\eta_{ia}} \left( \alpha + \sum_{\substack{b\in I_r\\ b\neq k}} e^{\gamma x_{ib}^0 + \xi_{ib}} + e^{\gamma + \xi_{ik}} \right) \right\} \\
  &= 1-\exp\left\{ -w \sum_{a\in S_r} e^{\eta_{ia}} \left( \alpha + \sum_{b\in I_r} e^{\gamma x_{ib}^0 + \xi_{ib}} \right) \right\} \\
&= G_r(w)
\end{split}
\end{equation}
for all $w$. Therefore $F^{-1}_r(U_r) < G^{-1}_r(U_r)$ by monotonicity of 
$F_r(w)$ and $G_r(w)$, so the constructed infection times are 
\begin{equation}
\begin{split}
  \tilde T_{ij}^1 &= \sum_{l=1}^{j} \tilde W_l^1 
              = \sum_{l=1}^{j} F_l^{-1}(U_l) 
              = \sum_{l=1}^{k} G_l^{-1}(U_l) + \sum_{r=k+1}^{j} F_r^{-1}(U_r) \\
              &< \sum_{l=1}^{k} G_l^{-1}(U_l) + \sum_{r=k+1}^{j} G_r^{-1}(U_r) 
              = \sum_{l=1}^{j} \tilde W_l^0 
              = \tilde T_{ij}^0  .
  \end{split}
\end{equation}
Therefore 
$\tilde T_{ij}^1 \le \tilde T_{ij}^0$ and hence 
$\Pr(\tilde Y_{ij}^1(t) \ge \tilde Y_{ij}^0(t))=1$.  By Lemma \ref{lem:dominance}, 
$Y_{ij}(t,\x^1)$ stochastically dominates $Y_{ij}(t,\x^0)$ for all $t>0$.  
Because infection of subject $k$ before subject $j$ occurs with 
positive probability, it follows that the expected values of the
potential infection outcomes obey $\overline{Y}_{ij}(t,\x^1)>\overline{Y}_{ij}(t,\x^0)$.
The case $\gamma>0$ is the same as for $\gamma<0$, with inequalities switched. 

In summary, if $\gamma<0$ then $\overline{Y}_{ij}(t,\x_i^1)>\overline{Y}_{ij}(t,\x_i^0)$; 
  if $\gamma=0$ then $\overline{Y}_{ij}(t,\x_i^1)=\overline{Y}_{ij}(t,\x_i^0)$; 
  and if $\gamma>0$ then $\overline{Y}_{ij}(t,\x_i^1)<\overline{Y}_{ij}(t,\x_i^0)$.  
With these intermediate results in hand, we assess the role of the block randomization design.  

  Now let $\z$ be a binary vector of length $n_i-1$ with $m_i-1$ positive elements. Define 
  $\mathcal{P}_{i}(\z) = \{ \w\in\{0,1\}^{n_i-1}:\ (\w'\z,\w'\mathbf{1})=(m_i-1,m_i) \}$
 as the set of $n_i-m_i$ binary vectors $\w$ of length $n_i-1$ for which all positive elements of $\z$ are also positive in $\w$, and in addition $\w$ contains one more positive element.  
  Using this definition, and the combinatorial identity
\begin{equation}
  \binom{n_i-1}{m_i} = \frac{n_i-m_i}{m_i} \binom{n_i-1}{m_i-1}, 
  \label{eq:binom}
\end{equation}
  we can decompose a sum over allocations of $m_i$ treatments to $n_i-1$ subjects into a sum over allocations of $m_i-1$ treatments to $n_i-1$ subjects, and an additional allocation of treatment to one more,
\begin{equation}
  \sum_{\w\in\mathcal{X}^{n_i-1}_{m_i}} \overline{Y}_{ij}(t,0,\w) = \frac{1}{m_i} \sum_{\z\in\mathcal{X}^{n_i-1}_{m_i-1}} \sum_{\w\in\mathcal{P}_i(\z)} \overline{Y}_{ij}(t,0,\w) .
  \label{eq:partner}
\end{equation}
The factor $1/m_i$ appears in the right-hand side above because there are $m_i$ allocations $\z$ for which a given $\w\in \mathcal{P}_i(\z)$ is compatible; the double sum over-counts allocations by a factor of $m_i$.  Using this fact, we expand $DE_{ij}(\mathcal{T})$ into a sum over allocations to subjects other than $j$, 
  \begin{equation}
    \begin{split}
      DE_{ij}(\mathcal{T}) 
      &= \binom{n_i-1}{m_i-1}^{-1} \sum_{\z\in\mathcal{X}^{n_i-1}_{m_i-1}} \overline{Y}_{ij}(\mathcal{T},1,\z) - \binom{n_i-1}{m_i}^{-1} \sum_{\w\in\mathcal{X}^{n_i-1}_{m_i}} \overline{Y}_{ij}(\mathcal{T},0,\w)  \\
      &= \binom{n_i-1}{m_i-1}^{-1} \left( \sum_{\z\in\mathcal{X}^{n_i-1}_{m_i-1}} \overline{Y}_{ij}(\mathcal{T},1,\z) - \frac{m_i}{n_i-m_i} \sum_{\w\in\mathcal{X}^{n_i-1}_{m_i}} \overline{Y}_{ij}(\mathcal{T},0,\w)  \right) \\
      &= \binom{n_i-1}{m_i-1}^{-1} \frac{1}{n_i-m_i} \sum_{\z\in\mathcal{X}^{n_i-1}_{m_i-1}} \left( (n_i-m_i) \overline{Y}_{ij}(\mathcal{T},1,\z) - \sum_{\w\in\mathcal{P}_i(\z)} \overline{Y}_{ij}(\mathcal{T},0,\w)  \right) \\
      &= \binom{n_i-1}{m_i-1}^{-1} \frac{1}{n_i-m_i} \sum_{\z\in\mathcal{X}^{n_i-1}_{m_i-1}} \sum_{\w\in\mathcal{P}_i(\z)} \left( \overline{Y}_{ij}(\mathcal{T},1,\z) - \overline{Y}_{ij}(\mathcal{T},0,\w)  \right) \\
    \end{split}
    \label{eq:decompose}
  \end{equation}
  where the first equality follows from \eqref{eq:iapo} under block randomization with $m_i$ of $n_i$ subjects treated, the second by \eqref{eq:binom}, the third by \eqref{eq:partner}, and the fourth because there are $n_i-m_i$ terms in the sum over $\w\in\mathcal{P}(\z)$.  
  Therefore, $DE_{ij}(\mathcal{T})$ can be expressed as a sum of contrasts between the average outcome of $j$ under joint treatments $(1,\z)$ and $(0,\w)$ where $\w$ is the same as $\z$, but with one additional treated subject.  Each contrast in the last line of \eqref{eq:decompose} has sign as given above, and the result follows. 
\end{proof}

\noindent The proof of Proposition \ref{prop:cluster} is very similar and is presented in the Supplement.  
Three final results generalize the results for the individual average direct effect $DE_{ij}(\mathcal{T})$ to the cluster and population average direct effect estimands.  The proofs, which rely only on Propositions \ref{prop:bernoulli}, \ref{prop:block}, and \ref{prop:cluster} and the definitions of $DE_i(\mathcal{T})$ and $DE(\mathcal{T})$, are omitted.  

\begin{cor}[Cluster and population average $DE$ under Bernoulli randomization]
  Suppose $\beta=0$ and treatment assignment is Bernoulli randomized.  Then $DE_i(\mathcal{T})=DE(\mathcal{T})=0$. 
\label{cor:multiple:bernoulli}
\end{cor}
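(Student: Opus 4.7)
\textbf{Proof plan for Corollary~\ref{cor:multiple:bernoulli}.} The plan is to use Proposition~\ref{prop:bernoulli} pointwise in $(i,j)$ and then invoke the two defining averages. First I would observe that Proposition~\ref{prop:bernoulli} gives the conclusion $DE_{ij}(\mathcal{T}) = 0$ for every cluster $i$ and every subject $j$ in cluster $i$, since the proposition's hypotheses ($\beta=0$ and Bernoulli randomization) apply uniformly across $(i,j)$: Bernoulli randomization is defined cluster-wise with the same success probability $p$, so nothing in the individual-level statement depends on which cluster or which subject one picks.

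Next I would substitute this into the definition $DE_i(\mathcal{T}) = n_i^{-1} \sum_{j=1}^{n_i} DE_{ij}(\mathcal{T})$ to conclude $DE_i(\mathcal{T}) = 0$, and then into $DE(\mathcal{T}) = N^{-1} \sum_{i=1}^{N} DE_i(\mathcal{T})$ to conclude $DE(\mathcal{T}) = 0$. This is a two-line argument consisting of taking a sum of zeros.

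There is no genuine obstacle here; the corollary is immediate because the coupling/zero-direct-effect property established in Proposition~\ref{prop:bernoulli} is already at the individual level, and the cluster and population effects are literal arithmetic means of individual effects. The only thing worth flagging in the write-up is that Bernoulli randomization is ``atomic'' in the sense that the marginal assignment law is identical for every subject across every cluster, so no reweighting across clusters is needed before averaging — which is precisely why the authors chose to omit the proof.
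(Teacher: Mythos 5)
Your proposal is correct and matches the paper's intended argument exactly: the authors omit the proof precisely because, as they note, it "relies only on Proposition~\ref{prop:bernoulli} \ldots and the definitions of $DE_i(\mathcal{T})$ and $DE(\mathcal{T})$," which is the pointwise-application-then-average-of-zeros argument you give. Nothing further is needed.
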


\begin{cor}[Cluster and population average $DE$ under block randomization]
Suppose $\beta=0$, treatment assignment is block randomized. 
If $\gamma<0$ then $DE_i(\mathcal{T})>0$ and $DE(\mathcal{T})>0$;
if $\gamma=0$ then $DE_i(\mathcal{T})=DE(\mathcal{T})=0$; and
if $\gamma>0$ then $DE_i(\mathcal{T})<0$ and $DE(\mathcal{T})<0$.
\label{cor:multiple:block}
\end{cor}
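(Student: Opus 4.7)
The plan is to bootstrap Proposition \ref{prop:block}, which determines the sign of the individual average direct effect $DE_{ij}(\mathcal{T})$ under block randomization, up to the cluster and population averages via elementary arithmetic. Since the block design is defined cluster-by-cluster, Proposition \ref{prop:block} applies verbatim to every subject $j$ in every cluster $i$: with $\beta=0$, the sign of $DE_{ij}(\mathcal{T})$ is determined solely by the sign of $\gamma$, and this conclusion is uniform in the pair $(i,j)$.

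For the cluster-level estimand I would simply write $DE_i(\mathcal{T}) = n_i^{-1} \sum_{j=1}^{n_i} DE_{ij}(\mathcal{T})$ and observe that when $\gamma<0$ each summand is strictly positive by Proposition \ref{prop:block}, hence so is the average; when $\gamma>0$ the strictly negative summands yield a strictly negative average; and when $\gamma=0$ every summand vanishes. The population estimand $DE(\mathcal{T}) = N^{-1} \sum_{i=1}^{N} DE_i(\mathcal{T})$ is then handled by the same averaging step, substituting the cluster-level sign just obtained for each $DE_i(\mathcal{T})$.

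The only point worth checking is that the strict inequalities survive averaging, which requires the strict inequality of Proposition \ref{prop:block} to hold for every $(i,j)$, not merely on average. The coupling proof of that proposition already supplies this uniformity: it exhibits a positive-probability event (subject $k$ being infected before subject $j$ in the constructed coupling) on which $\tilde T_{ij}^1 < \tilde T_{ij}^0$ strictly, and under the standing assumptions $\alpha>0$ and $\mathcal{T}>0$ this event has positive probability in every cluster admissible under the block design (for which $n_i\ge 2$). With that subtlety noted, no further work beyond linearity of expectation is required, which is why the authors elect to omit the proof.
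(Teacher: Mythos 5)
Your proposal is correct and matches the paper's intended argument exactly: the authors omit the proof precisely because, as they state, it relies only on Proposition~\ref{prop:block} together with the definitions $DE_i(\mathcal{T}) = n_i^{-1}\sum_{j} DE_{ij}(\mathcal{T})$ and $DE(\mathcal{T}) = N^{-1}\sum_i DE_i(\mathcal{T})$, i.e.\ on averaging terms that all share the same strict sign. Your added remark that strictness must hold uniformly over $(i,j)$ --- guaranteed by $\alpha>0$, $\mathcal{T}>0$, and the admissibility of the block design in each cluster --- is the right thing to check and is consistent with the paper.
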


\begin{cor}[Cluster and population average $DE$ under cluster randomization]
Suppose $\beta=0$ and treatment assignment is cluster randomized. 
If $\gamma<0$ then $DE_i(\mathcal{T})<0$ and $DE(\mathcal{T})<0$;
if $\gamma=0$ then $DE_i(\mathcal{T})=DE(\mathcal{T})=0$; and
if $\gamma>0$ then $DE_i(\mathcal{T})>0$ and $DE(\mathcal{T})>0$.
\label{cor:multiple:cluster}
\end{cor}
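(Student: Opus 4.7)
The plan is to apply Proposition~\ref{prop:cluster} pointwise to each individual $j$ in each cluster $i$, and then propagate the resulting sign through the defining averages $DE_i(\mathcal{T}) = n_i^{-1}\sum_{j=1}^{n_i} DE_{ij}(\mathcal{T})$ and $DE(\mathcal{T}) = N^{-1}\sum_{i=1}^{N} DE_i(\mathcal{T})$. The key observation is that Proposition~\ref{prop:cluster} delivers a conclusion about $DE_{ij}(\mathcal{T})$ that depends only on the sign of $\gamma$ and not on the identity of the subject $j$ or the cluster $i$, so every summand in both averages carries a common sign.

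First I would dispose of the null case. If $\gamma = 0$, Proposition~\ref{prop:cluster} gives $DE_{ij}(\mathcal{T}) = 0$ for every pair $(i,j)$. Averaging over $j$ then yields $DE_i(\mathcal{T}) = 0$ for every $i$, and averaging over $i$ yields $DE(\mathcal{T}) = 0$. Next, for $\gamma < 0$, Proposition~\ref{prop:cluster} gives $DE_{ij}(\mathcal{T}) < 0$ for every subject $j$ and every cluster $i$. A finite sum of strictly negative real numbers divided by a positive integer is strictly negative, so
\[
DE_i(\mathcal{T}) \;=\; \frac{1}{n_i}\sum_{j=1}^{n_i} DE_{ij}(\mathcal{T}) \;<\; 0
\]
for each cluster $i$, and the same reasoning applied one level up gives $DE(\mathcal{T}) < 0$. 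The case $\gamma > 0$ proceeds identically with all inequalities reversed.

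There is essentially no obstacle here: the analytic work has already been carried out inside Proposition~\ref{prop:cluster} (via the coupling of Algorithm~\ref{alg:coupling} and Lemma~\ref{lem:coupling}), and the passage from individual to cluster to population estimands is a trivial averaging step once one notes that the sign is uniform across all summands. The only point worth highlighting in the write-up is that this averaging argument relies on the fact that Proposition~\ref{prop:cluster} assigns the same sign to $DE_{ij}(\mathcal{T})$ for every $(i,j)$, which rules out cancellation between individuals within a cluster or between clusters in the population. The three-case sign classification therefore transfers verbatim from the individual direct effect to both $DE_i(\mathcal{T})$ and $DE(\mathcal{T})$.
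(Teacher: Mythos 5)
Your proposal is correct and matches the paper's intended argument: the paper omits the proof but states explicitly that it relies only on Proposition~\ref{prop:cluster} and the definitions of $DE_i(\mathcal{T})$ and $DE(\mathcal{T})$, which is exactly the sign-preserving averaging you carry out. Nothing further is needed.
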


\subsection{Simulation Study}

We investigate the properties of the population average direct effect as the true infectiousness effect $\gamma$ changes.  The hazard of infection takes the form of \eqref{eq:haz} where the null hypothesis is $\beta=0$ and we investigate $DE(\mathcal{T})$ as a function of $\gamma\in[-2,2]$.  The exogenous force of infection is $\alpha=0.01$, the individual susceptibility coefficients $\eta_{ij}$ are independent $\text{Normal}(\mu_\eta,\sigma^2_\eta)$ and infectiousness coefficients $\xi_{ij}$ are independent $\text{Normal}(\mu_\xi,\sigma^2_\xi)$.  
Unless otherwise noted, the cluster size $n_i$ is $2+\text{Poisson}(2)$, the observation time is $\mathcal{T}=10$, and all subjects were uninfected at baseline, $Y_{ij}(0)=0$. 
The Supplement provides additional details about the simulation setting. 

\begin{figure}[t]
  \centering
	\includegraphics[scale=0.7]{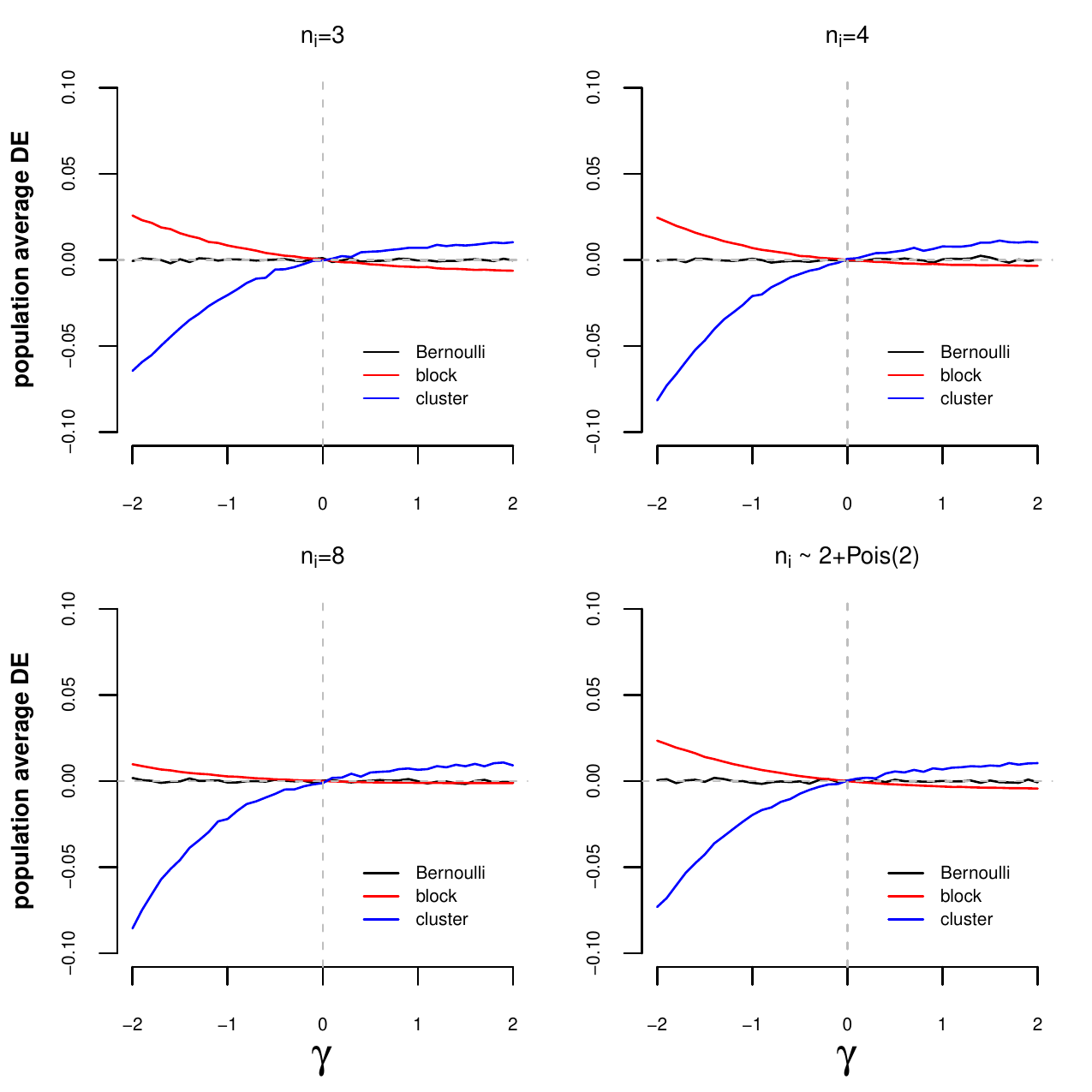} \\
  \caption{Simulation results for $DE(\mathcal{T})$ under the null hypothesis of no susceptibility effect $\beta=0$, as a function of the infectiousness effect $\gamma$, for different randomization designs and cluster size distributions.  Bernoulli randomization (black) recovers $DE(\mathcal{T})=0$ for any $\gamma$. Block randomization (red) shows $DE(\mathcal{T})$ has the opposite sign as $\gamma$, and cluster randomization (blue) shows $DE(\mathcal{T})$ has the same sign as $\gamma$. }
  \label{fig:sim1}
\end{figure} 

Figure \ref{fig:sim1} shows simulation results validating the analytic derivations above.  Under Bernoulli randomization $DE(\mathcal{T})$ is zero for any $\gamma$; under block randomization it has the opposite sign as $\gamma$; and under cluster randomization it has the same sign as $\gamma$.  
Figure \ref{fig:sim2} shows properties of $DE(\mathcal{T})$ as a function of $\gamma$ under various epidemiologic and study design parameters, when $\beta=0$.  
The top row shows results under block randomization, and the bottom row shows results under cluster randomization.  
The left column shows $DE(\mathcal{T})$ for increasing values of $\sigma^2$, the variability of individual-level susceptibility and infectiousness. 
The middle column shows how $DE(\mathcal{T})$ changes with $\mu_\xi$, the average value of the individual-level infectiousness coefficient.  
When these values are large and negative, few infections are transmitted by infected individuals, so the value of $\gamma$ has little effect on $DE(\mathcal{T})$, which stays near zero. When $\mu_\xi$ is large and positive, something similar happens: infected individuals are highly infectious even when $\gamma<0$, and $DE(\mathcal{T})$ is near zero for a wide range of values of $\gamma$. 
When $\mu_\xi$ is near zero, the value of $\gamma$ fully determines the infectiousness of treated individuals, and $DE(\mathcal{T})$ exhibits the largest difference from zero.  
In the right column, we examine the effect of changes and heterogeneity in the follow-up time $\mathcal{T}$, allowing the observation time $\mathcal{T}_i$ to vary between clusters.   In all cases, the magnitude of the direct effect increases with the absolute value of $\gamma$.  While Propositions \ref{prop:bernoulli} - \ref{prop:cluster} give the sign of $DE(\mathcal{T})$ for any combination of parameter values, simulation results show that the magnitude of $DE(\mathcal{T})$ changes substantially depending on the specific study design and epidemiologic characteristics.  In the Supplement we present a simulation study exploring the properties of $DE(\mathcal{T})$ when $\beta\neq 0$.

\begin{figure}[t]
  \centering
	\includegraphics[scale=0.9]{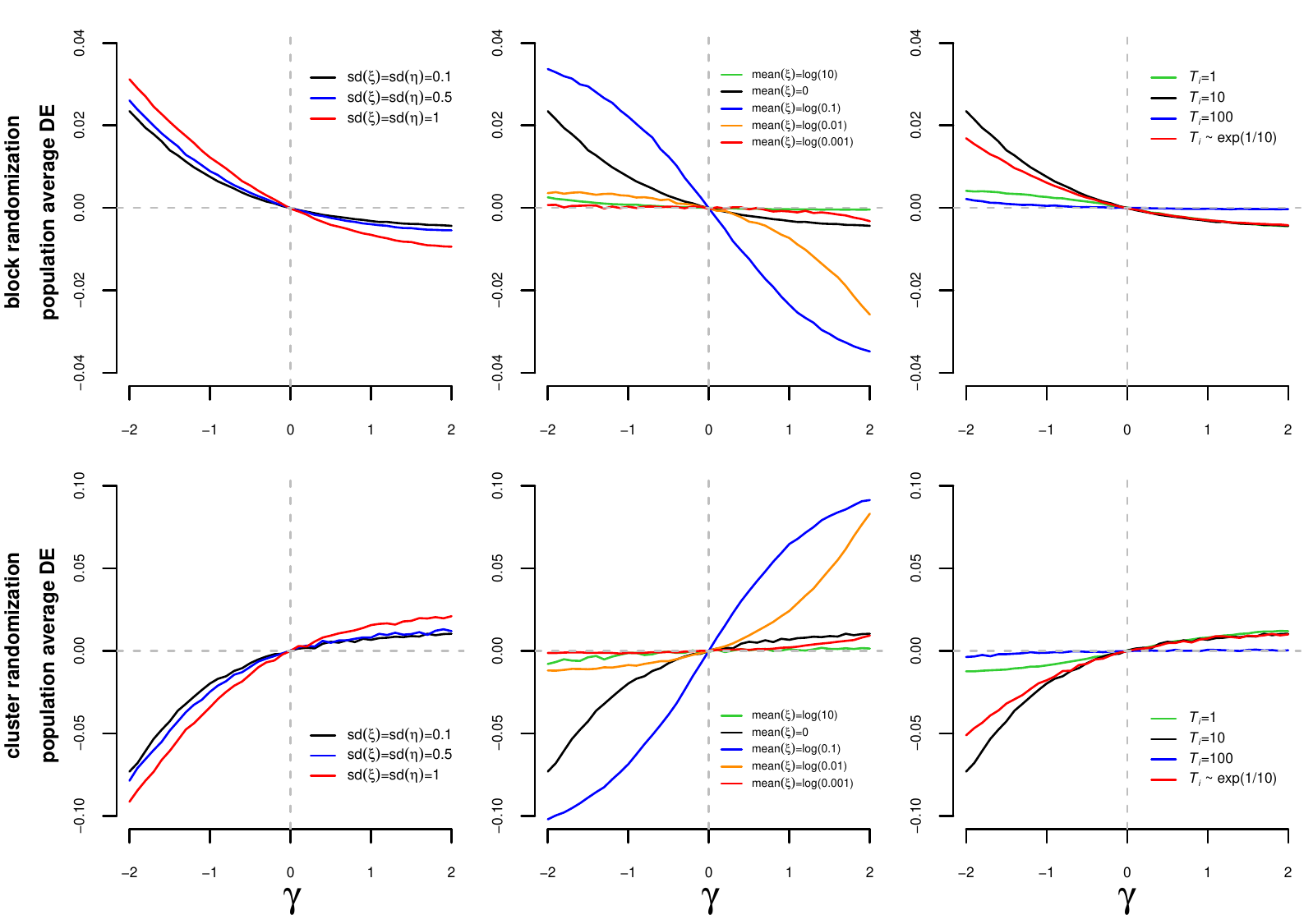} \\
  \caption{Further simulation results for $DE(\mathcal{T})$ under the null hypothesis of no susceptibility effect, $\beta=0$. The top row shows $DE(\mathcal{T})$ under block randomization and the bottom row shows $DE(\mathcal{T})$ under cluster randomization. The left column shows results under different values of the variance of individual-level susceptibility $\eta_{ij}$ and infectiousness $\xi_{ij}$. The middle column shows how $DE(\mathcal{T})$ changes with the mean value of infectiousness $\xi_{ij}$. The right column shows results under different distributions for the cluster-level observation time $\mathcal{T}_i$. }
  \label{fig:sim2}
\end{figure}


\section{Discussion}

\citet{greenwood1915statistics} proposed three conditions for making valid inferences about the effect of a vaccine: 1) ``The persons must be, in all material respects, alike''; 2) ``The effective exposure to the disease must be identical in the case of inoculated and uninoculated persons''; and 3) ``The criteria of the fact of inoculation and of the fact of the disease having occurred must be independent''. Randomization ensures that conditions 1 and 3 are satisfied on average \citep{rothman2008modern,greenland1986identifiability,halloran2010design}.  In this paper, we have shown that under certain randomization designs, 
the direct effect defined by \citet{hudgens2008toward} compares individual infection outcomes in a way that ensures condition 2 
does not hold: treated and untreated subjects experience differential exposure to infectiousness, and $DE_{ij}(t)$ is subject to confounding. 

The direct effect is a well-defined and natural statistical estimand that is identified under randomization with mild assumptions.  But under some randomization designs, it may not provide empirical researchers with the individualistic causal effect they seek: ``the difference betweeen the outcome in the individual with the intervention and what the outcome would have been without the intervention, all other things being equal'' \citep[][page 332]{halloran1991study}, because it does not hold all other things equal.  A heuristic explanation provides useful intuition. 
\begin{enumerate}
  \item Under Bernoulli randomization, treated and untreated subjects are exposed to the \emph{same} number of treated individuals on average. 
  \item Under block randomization, treated subjects are exposed to \emph{fewer} treated individuals ($m_i-1$) than untreated subjects ($m_i$).  
  \item Under cluster randomization, treated subjects are exposed to \emph{more} treated individuals ($n_i-1$) than untreated subjects ($0$). 
  \end{enumerate}
These differences in joint treatment distribution are natural consequences of the randomization designs; Propositions \ref{prop:bernoulli}-\ref{prop:cluster} establish the connection to \emph{differential exposure to infectiousness}, and to the direct effect estimand under the structural transmission model \eqref{eq:haz}. When the null hypothesis of $\beta=0$ is true and an infectiousness effect exists ($\gamma\neq 0$), treated and untreated subjects under block and cluster randomization experience differential exposure to infectiousness that depends on the sign of the infectiousness coefficient $\gamma$.  These results apply to individuals within clusters, and hold for any number of clusters.  Similarly, odds and risk ratios computed by contrasting average individual outcomes under treatment versus no treatment may be subject to the same biases \citep[e.g.][]{morozova2018risk}.

Our main results investigate the direct effect under the null hypothesis $\beta=0$ because this case is analytically tractable, and because preservation of the null is a desirable property of any effect measure or test statistic. Some real-world interventions may have this feature; for example, transmission-blocking vaccines \citep{kaslow2002transmission,delrieu2015design} have negligible susceptibility effect, but may be effective in reducing infectiousness of infected individuals.  Isolation policies may also confer minimal susceptibility benefit to individuals assigned to ``quarantine upon infection'', and a strong beneficial infectiousness effect on their contacts \citep{aiello2016design}.  
For untested interventions like new vaccines, investigators may not know whether the susceptibility effect is beneficial, harmful, or null. 
The results outlined here may apply in cases where the true susceptibility effect $\beta$ is nonzero: when the average infection outcome $\overline{Y}_{ij}(t,\x_i)$ is a continuous function of $\beta$, there may exist an interval around $\beta\neq 0$ in which the direct effect is \emph{biased across the null hypothesis of no susceptibility effect} under some designs. 
Therefore estimating $DE(t)=0$ under block randomization need not imply that the susceptibility effect is null, nor does estimating $DE(t)\neq 0$ imply that the susceptibility effect is not null. 
In particular, simulation results show that under block randomization, a vaccine that both helps prevent infection in each person who receives it ($\beta<0$) and helps prevent transmission upon infection ($\gamma<0$) can nevertheless exhibit $DE(t)>0$.  When $DE(t)$ is interpreted as a causal parameter, investigators may conclude that an effective intervention is harmful to the individuals who receive it because its ``direct effect'' is positive.  Simulation results in the Supplement explore the conditions leading to sign mismatch between $DE(t)$ and $\beta\neq 0$.

In this paper, we employ a relatively simple structural transmission model \eqref{eq:haz} because it is widely used and well understood by infectious disease epidemiologists, the hazard of infection has a simple functional form, and its parameters $\beta$ and $\gamma$ correspond naturally to the susceptibility and infectiousness effects defined by \citet{halloran1997study}.  However, this transmission model does not incorporate additional realistic features of infectious disease transmission, such as a latent infection period, multiple infection, removal/recovery, or treatment following infection. Similarly, we have not modeled heterogeneous contact patterns within clusters, nor violated stratified interference by permitting transmission between clusters.  We conjecture that more sophisticated structural models of infectious disease transmission would not differ in their qualitative implications: dependent randomization designs induce differential exposure to infectiousness whenever the treatment affects infectiousness, resulting in counfounding of the direct effect as a measure of the susceptibility effect. If the direct effect under dependent randomization designs does not provide a meaningful approximation to the susceptibility effect of interest under a simplistic transmission model such as \eqref{eq:haz}, we do not expect it to do so under a richer class of more complex structural transmission models. 


Researchers who wish to avoid the pathologies of the direct effect in a randomized trial have three basic options. First, Proposition \ref{prop:bernoulli} shows that changing the randomization design to Bernoulli allocation within clusters breaks the dependence between $x_{ij}$ and $\x_{i(j)}$ \citep{savje2017average}.  Then the conditional probability $\Pr(\X_{i(j)}=\x_{i(j)}|x_{ij}=x)$ in \eqref{eq:iapo} becomes the marginal probability $\Pr(\X_{i(j)}=\x_{i(j)})$, and the direct effect becomes a simple average of individualistic effects. Second, researchers may target a marginal estimand that does not condition on the assigned treatment, as \citet{vanderweele2011effect} and \citet{savje2017average} recommend.  This approach would permit use of a dependent randomization design by changing the conditional marginalizing distribution in \eqref{eq:iapo} to the unconditional distribution of the treatment to other units, $\Pr(\X_{i(j)}=\x_{i(j)})$, provided this probability is positive under the design.  Third, when structural assumptions are warranted and enough data are available, researchers may choose to fit a structural model similar to \eqref{eq:haz} to estimate parameters (e.g. $\beta$) coresponding to the causal effects of interest \citep{rhodes1996counting,auranen2000transmission,cauchemez2006investigating,kenah2015semi}. 

Finally, we have focused here on three idealized randomization designs that are employed in real-world intervention trials.  Non-randomized (i.e. pragmatic, or observational) studies of interventions or risk factors for infection in clusters occupy an uncertain middle ground.  Even when the intervention or covariate of interest is unrelated to other baseline confounders and independent of the potential infection outcomes, it may be unreasonable to assume that it is distributed independently at random within clusters, as it would be under Bernoulli randomization.  Likewise, strict negative or positive correlation in covariate values, of the kind induced by block and cluster randomization respectively, seems implausible.  When any dependence exists in the distribution of treatment in an observational study, regression adjustment or stratification on baseline covariates may not be sufficient to ensure exchangeability of subjects with respect to infection exposure during the study.  Depending on the distribution of treatment, the relationship between the direction or sign of marginal contrasts and the true susceptibility effect may be difficult to predict.

\if0\blind
{

\noindent \textbf{Acknowledgements}: 
This work was supported by NIH grants NICHD DP2 HD091799-01 and NIDA R36 DA042643. 
We are grateful to
Peter M. Aronow, 
Xiaoxuan Cai, 
Ted Cohen, 
Soheil Eshghi, 
Gregg S. Gonsalves, 
M. Elizabeth Halloran, 
Michael Hudgens, 
Eben Kenah, 
Zehang Li, 
Wen Wei Loh, 
Sida Peng, 
Fredrik S\"avje, 
Yushuf Sharker, 
and
Daniel Weinberger 
for helpful comments.  

} \fi


\section{Proof of Proposition \ref{prop:cluster}}

The proof of Proposition \ref{prop:cluster} proceeds in the same way as the proof of Proposition \ref{prop:block} in the main text.

\begin{proof}[Proof of Proposition~\ref{prop:cluster}]
Define $\x_i^1=(1,\ldots,1)$ and $\x_i^0=(0,\ldots,0)$.
  First, we deduce a stochastic order relation for 
  treatment assigments $\x_i^1$ and $\x_i^0$.
When $\gamma=0$, $F_l(w)=G_l(w)$ for all $l$ and all $w$.  Therefore 
$\tilde T_{il}^1=\tilde T_{il}^0$ for all $l$ and so 
$\tilde Y_{ij}^1(t)=\tilde Y_{ij}^0(t)$ for all $t$.
Then $Y_{ij}(t,\x_i^1)$ is equal in distribution to $Y_{ij}(t,\x_i^0)$ for all $t$  
and so $\overline{Y}_{ij}(t,\x_i^1)=\overline{Y}_{ij}(t,\x_i^0)$.  
When $\gamma<0$, note that $\tilde Y_{ij}^1(t) \le \tilde Y_{ij}^0(t)$ for all $t$ 
if and only if $\tilde T_{ij}^1 \ge \tilde T_{ij}^0$.  
Suppose without loss of generality that subjects are relabeled in order of 
their infection in the constructed process, so the $l$th infection occurs in 
subject $l$, $v_l=l$. The waiting time from infection of subject $l-1$ to 
infection of $l$ has distribution function
\begin{equation}
\begin{split}
F_l(w) &= 1-\exp\left\{ -w \sum_{a\in S_l} e^{\eta_{ia}} 
  \left( \alpha + \sum_{b\in I_l} e^{\gamma x_{ib}^1 + \xi_{ib}}\right) \right\} \\
       &< 1-\exp\left\{ -w \sum_{a\in S_l} e^{\eta_{ia}} 
         \left( \alpha + \sum_{b\in I_l} e^{\xi_{ib}}\right) \right\} \\
       &= 1-\exp\left\{ -w \sum_{a\in S_l} e^{\eta_{ia}} 
         \left( \alpha + \sum_{b\in I_l} e^{\gamma x_{ib}^0 + \xi_{ib}} \right) \right\} \\
       &= G_l(w)
\end{split}
\end{equation}
for all $w$. Therefore $F^{-1}_l(U_l) > G^{-1}_l(U_l)$ by monotonicity of 
$F_l(w)$ and $G_l(w)$, so the constructed infection times are 
\begin{equation}
 \tilde T_{ij}^1 = \sum_{l=1}^j \tilde W_l^1 
              = \sum_{l=1}^j F_l^{-1}(U_l) 
              > \sum_{l=1}^j G_l^{-1}(U_l) 
              = \sum_{l=1}^j \tilde W_l^0 
              = \tilde T_{ij}^0 
\end{equation}
where we interpret an empty sum to be equal to zero.  Therefore 
$\tilde T_{ij}^1 \ge \tilde T_{ij}^0$ and hence 
$\Pr(\tilde Y_{ij}^1(t) \le \tilde Y_{ij}^0(t))=1$.  
By Lemma \ref{lem:dominance}, $Y_{ij}(t,\x^0)$ strictly stochastically 
dominates $Y_{ij}(t,\x^1)$ for all $t>0$.  It follows that the expected 
values of the potential infection outcomes obey 
$\overline{Y}_{ij}(t,\x^1)<\overline{Y}_{ij}(t,\x^0)$ for all $t>0$.
Under cluster randomization, 
  \begin{equation}
    \begin{split}
      DE_{ij}(\mathcal{T}) &= \sum_{\z\in\mathcal{X}^{n_i-1}} 
        \overline{Y}_{ij}(\mathcal{T},1,\z) \Pr(\X_{i(j)}=\z|x_{ij}=1) 
          - \overline{Y}_{ij}(\mathcal{T},0,\z) \Pr(\X_{i(j)}=\z|x_{ij}=0) \\
      &= \sum_{\z\in\mathcal{X}^{n_i-1}} \overline{Y}_{ij}(\mathcal{T},1,\z) \indicator{|\z|=n_i-1} 
        - \overline{Y}_{ij}(\mathcal{T},0,\z)  \indicator{|\z|=0} \\
      &= \overline{Y}_{ij}(\mathcal{T},\x_i^1) - \overline{Y}_{ij}(\mathcal{T},\x_i^0),
    \end{split}
  \end{equation}
where $\mathcal{X}^{n_i}$ be the set of all binary $n_i$-vectors.
Therefore, $DE_{ij}(\mathcal{T})$ can be expressed as a contrast between the 
outcome of $j$ when all subjects are treated, versus when no subjects are 
treated, and we see that $DE_{ij}(\mathcal{T}) < 0$ when $\gamma<0$.
The case of $\gamma>0$ is the same as for $\gamma<0$, 
with inequalities switched. 
\end{proof}


\section{Simulation study}

\subsection{Additional simulation details}

The hazard of infection takes the form given in \eqref{eq:haz} of the main text, where $\beta=0$ and $\gamma$ takes a specified value. 
Unless otherwise noted, the exogenous force of infection is $\alpha=0.01$, the individual susceptibility coefficients $\eta_{ij}$ are independent $\text{Normal}(\mu_\eta,\sigma^2_\eta)$ and infectiousness coefficients $\xi_{ij}$ are independent $\text{Normal}(\mu_\xi,\sigma^2_\xi)$, all individuals were assumed uninfected at baseline $Y_{ij}(0)=0$, the cluster size $n_i$ is $2+\text{Poisson}(2)$, and the observation time is $\mathcal{T}=10$.  Table \ref{tbl:sim.par} summarizes the values of all simulation parameters.

\begin{table}[ht]
\caption{Summary of simulation parameters}
\label{tbl:sim.par} \centering%
\resizebox{0.9\textwidth}{!}{%
\begin{tabular}{l l l}
{} & {} & {} \\ 
\toprule
{Notation} & {Parameter} & {Value} \\
\midrule 
{$\beta$} & {susceptibility effect of $x$} & {0 in Figures \ref{fig:sim1} and \ref{fig:sim2} of the main text} \\
{} & {} & {[-2 ; 2] in Figures \ref{fig:2dsim1} - \ref{fig:2dsim9}} \\[0.75em]
{$\gamma$} & {infectiousness effect of $x$} & {[-2 ; 2]} \\[0.75em]
{$\Delta \beta$, $\Delta \gamma$} & {increment size for $\beta$ and $\gamma$} & {0.1} \\[0.75em]
{$\alpha$} & {external force of infection} & {0.01} \\[0.75em]
{$\eta$} & {individual-level susceptibility} & {unless otherwise noted, $\eta_{ij} \sim N (0, 0.1^2)$} \\[0.75em]
{$\xi$} & {individual-level infectiousness} & {unless otherwise noted, $\xi_{ij} \sim N (0, 0.1^2)$} \\[0.75em]
{$n_i$} & {size of cluster $i$} & {unless otherwise noted, $n_i \sim 2+\text{Pois}(2)$} \\[0.75em]
{$\mathcal{T}$} & {observation time} & {unless otherwise noted, $\mathcal{T} = 10$} \\[0.75em]
{$Y(0)$} & {infections at $t=0$} & {$Y_{ij}(0) = 0$, $j = 1,\ldots, n_i$; $i = 1, \ldots, N$} \\[0.75em]
{$p$} & {treatment assignment probability under} & {0.5} \\
{} & {Bernoulli and cluster randomization} & {} \\[0.75em]
{$m_i$} & {number treated per cluster} & {$\lfloor {n_i/2} \rfloor $}\\
{} & {under block randomization} & {} \\[0.75em]
{$N$} & {number of clusters} & {1000} \\[0.75em]
{$N_s$} & {number of simulations } & {100 - 2000} \\
{} & {per combination of parameter values} & {} \\[0.75em]
\bottomrule
\end{tabular} }
\end{table}

The following estimators are used to compute the population average $DE(\mathcal{T})$ in the simulation study.  Under Bernoulli randomization, define 
\[ \widehat{DE}(\mathcal{T}) = \frac{1}{N}\sum_{i=1}^N \frac{1}{n_i} \sum_{j=1}^{n_i} \frac{y_{ij}x_{ij}}{p} - \frac{y_{ij}(1-x_{ij})}{1-p} .\]
Under block randomization, define 
\[ \widehat{DE}(\mathcal{T}) = \frac{1}{N}\sum_{i=1}^N \frac{\sum_{j=1}^{n_i} y_{ij} x_{ij}}{\sum_{j=1}^{n_i} x_{ij}} - \frac{\sum_{j=1}^{n_i} y_{ij} (1-x_{ij})}{\sum_{j=1}^{n_i} (1-x_{ij})}  .\]
Under cluster randomization, let $S_i=1$ when the cluster is assigned treatment, and let $S_i=0$ otherwise.  Define 
\[ \widehat{DE}(\mathcal{T}) = \frac{\sum_{i=1}^N \frac{S_i}{n_i} \sum_{j=1}^{n_i} y_{ij}}{\sum_{i=1}^N S_i} - \frac{\sum_{i=1}^N \frac{(1-S_i)}{n_i} \sum_{j=1}^{n_i} y_{ij}}{\sum_{i=1}^N (1-S_i)}  .\]

\subsection{Additional simulation results}

Figures \ref{fig:sim1} and \ref{fig:sim2} in the main text illustrate the behavior of the population average $DE(\mathcal{T})$ as a function of the infectiousness effect $\gamma$ under the null hypothesis of $\beta=0$. In this section we provide the results of the simulations for a range of values of the susceptibility effect, $-2 < \beta < 2$.  
In Figures \ref{fig:2dsim1} - \ref{fig:2dsim9}, the top row shows a heat map of the population average $DE(\mathcal{T})$ as a function of the susceptibility effect $\beta$ (horizontal axis) and infectiousness effect $\gamma$ (vertical axis). Blue color corresponds to negative values of $DE(\mathcal{T})$, and red color to positive values. The $DE(\mathcal{T})$ is a direction-unbiased estimate of the susceptibility effect if red color is on the right of the vertical line that corresponds to $\beta=0$, and blue color is on the left of this line.  The bottom row of Figures \ref{fig:2dsim1} - \ref{fig:2dsim9} shows the regions in the two-dimensional $(\beta; \gamma)$ space, where the sign of the $DE(\mathcal{T})$ is opposite that of $\beta$. These regions are colored black.


Figures \ref{fig:2dsim1} - \ref{fig:2dsim3} correspond to the same study designs as those used to produce Figure \ref{fig:sim1} in the paper. 
The $DE(\mathcal{T})$ is direction-unbiased under Bernoulli randomization, while under block and cluster randomization the $DE(\mathcal{T})$ exhibits direction bias in some regions of the $(\beta; \gamma)$ space. Under block randomization, the sign of population average $DE(\mathcal{T})$ is opposite that of $\beta$ when $\beta$ and $\gamma$ have the same sign, and $\gamma$ is more extreme than $\beta$. Under cluster randomization, direction bias of the $DE(\mathcal{T})$ appears in the regions, where $\beta$ and $\gamma$ have opposite signs. Figure \ref{fig:2dsim3} shows that under cluster randomization and a given set of simulation parameters, when $\beta < 0$, the region of direction bias is very small. Absence of black regions in the upper left quadrants of the bottom row plots in Figure \ref{fig:2dsim3} is an artifact of the chosen range of values of $\beta$, as well as the step size.
The region of direction bias gets smaller with the increase of the cluster size. \\

\begin{figure} 
\centering
	\includegraphics[width=0.95\textwidth]{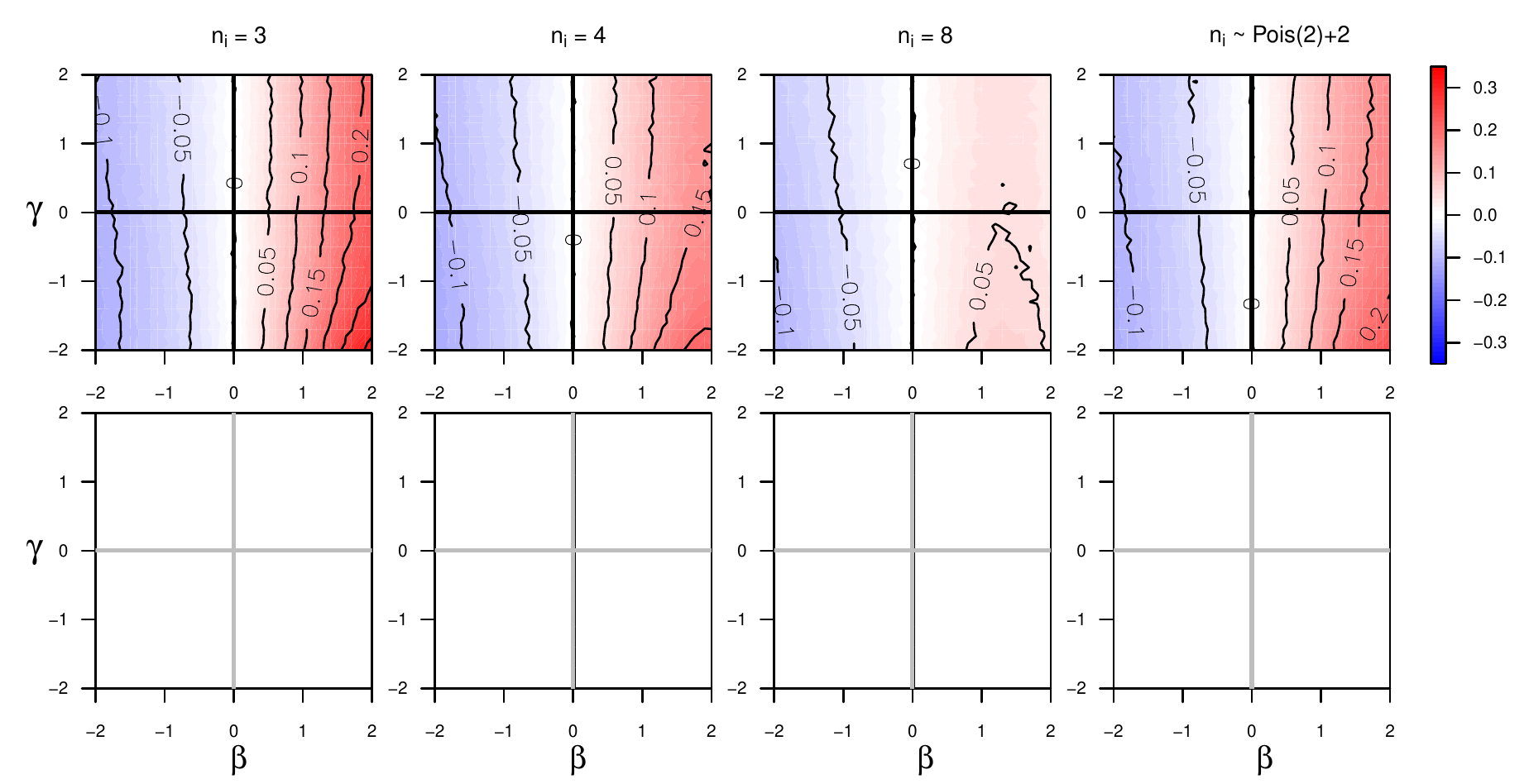} 
	\caption{Population average DE under Bernoulli randomization and different cluster sizes.}
	\label{fig:2dsim1}
\end{figure}

\begin{figure} 
\centering
	\includegraphics[width=0.95\textwidth]{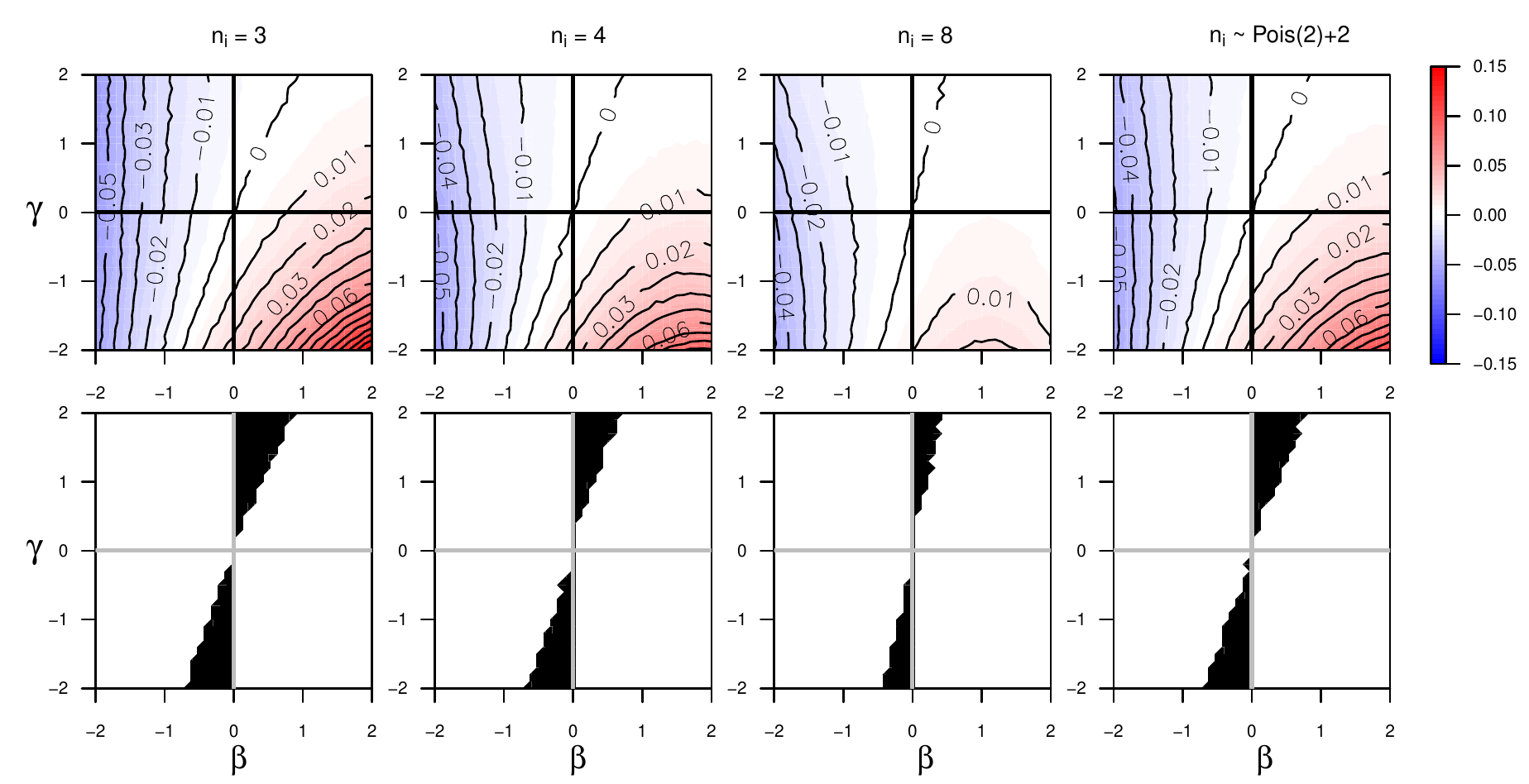} 
	\caption{Population average DE under block randomization and different cluster sizes.}
	\label{fig:2dsim2}
\end{figure}

\begin{figure} 
\centering
	\includegraphics[width=0.95\textwidth]{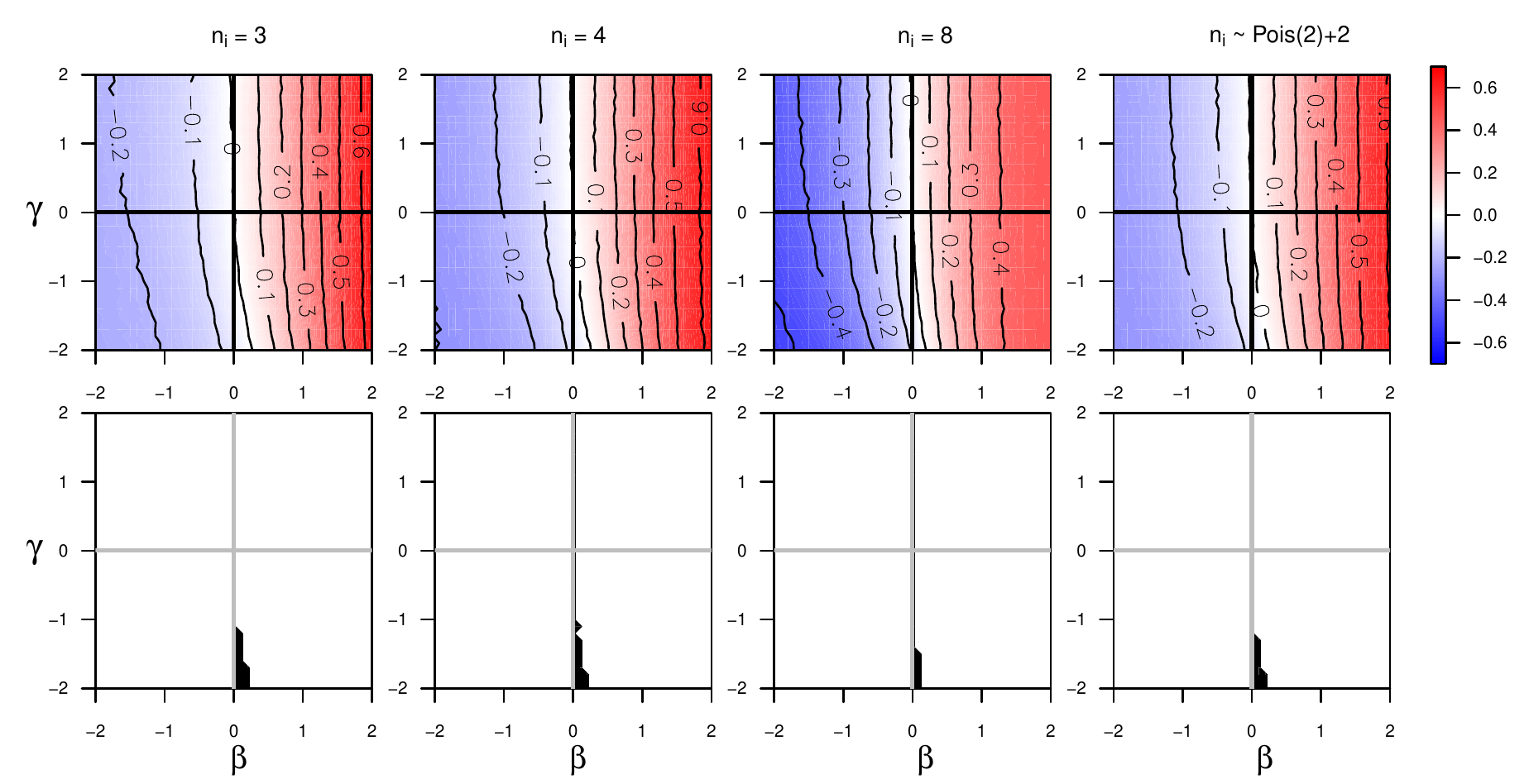} 
	\caption{Population average DE under cluster randomization and different cluster sizes.}
	\label{fig:2dsim3}
\end{figure}

The magnitude of $DE(\mathcal{T})$ under the null of $\beta=0$ is not necessarily related to the size of the direction-bias region when $\beta \ne 0$.  Figure \ref{fig:sim1} in the main text shows that under the null, cluster randomization results in a larger size of the bias compared to block randomization. At the same time, the region in the $(\beta; \gamma)$ space where the $DE(\mathcal{T})$ exhibits direction bias is larger under block compared to cluster randomization (all other thing being equal). This happens because under cluster randomization the $DE(\mathcal{T})$ changes substantially more rapidly in response to one unit change in the value of $\beta$ compared to the $DE(\mathcal{T})$ under block randomization.

\begin{figure} 
\centering
	\includegraphics[width=0.7\textwidth]{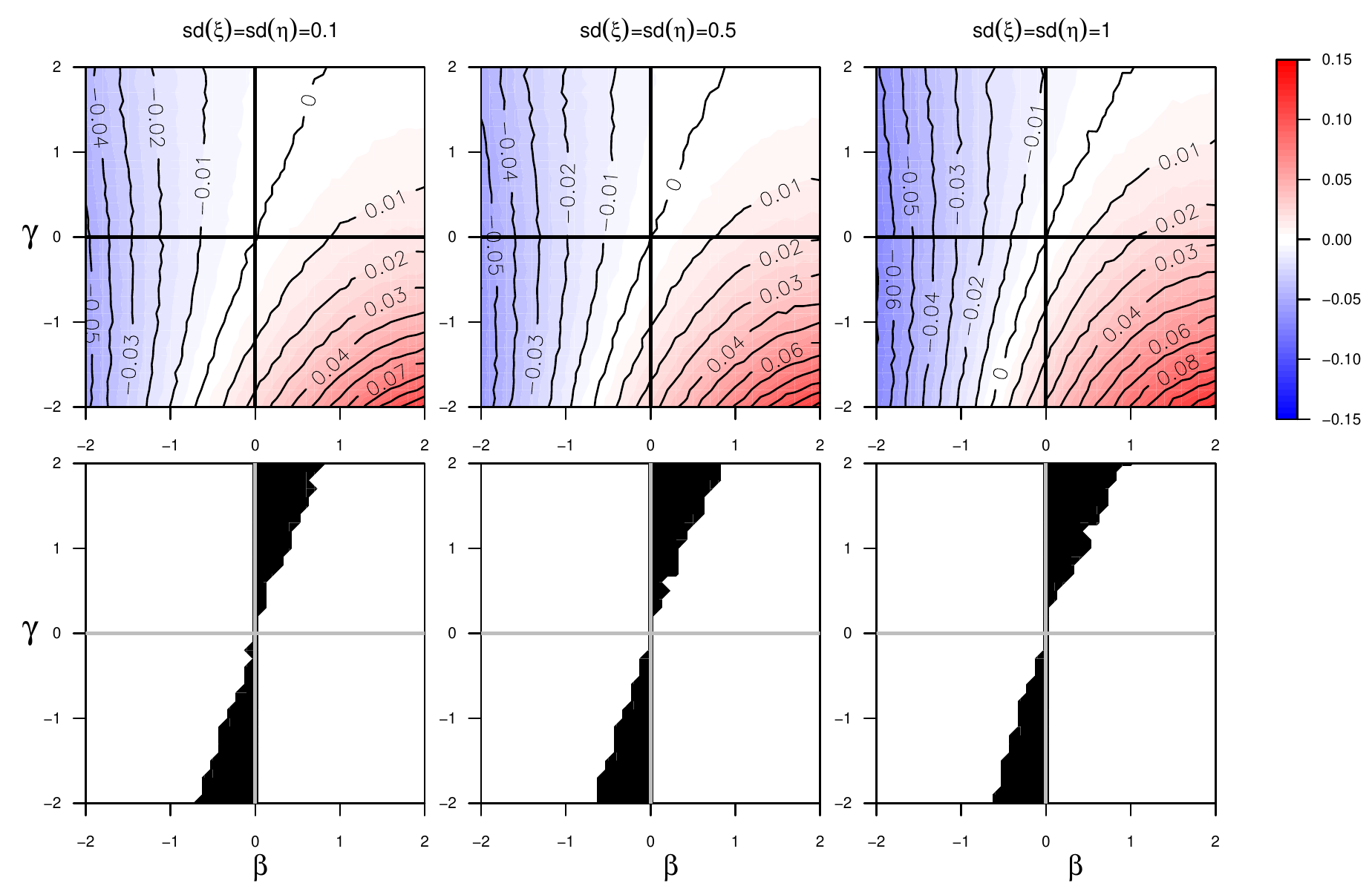} 
	\caption{Population average DE under block randomization and different variance of individual-level susceptibility ($\eta$) and infectiousness ($\xi$).}
	\label{fig:2dsim4}
\end{figure}

\begin{figure} 
\centering
	\includegraphics[width=0.7\textwidth]{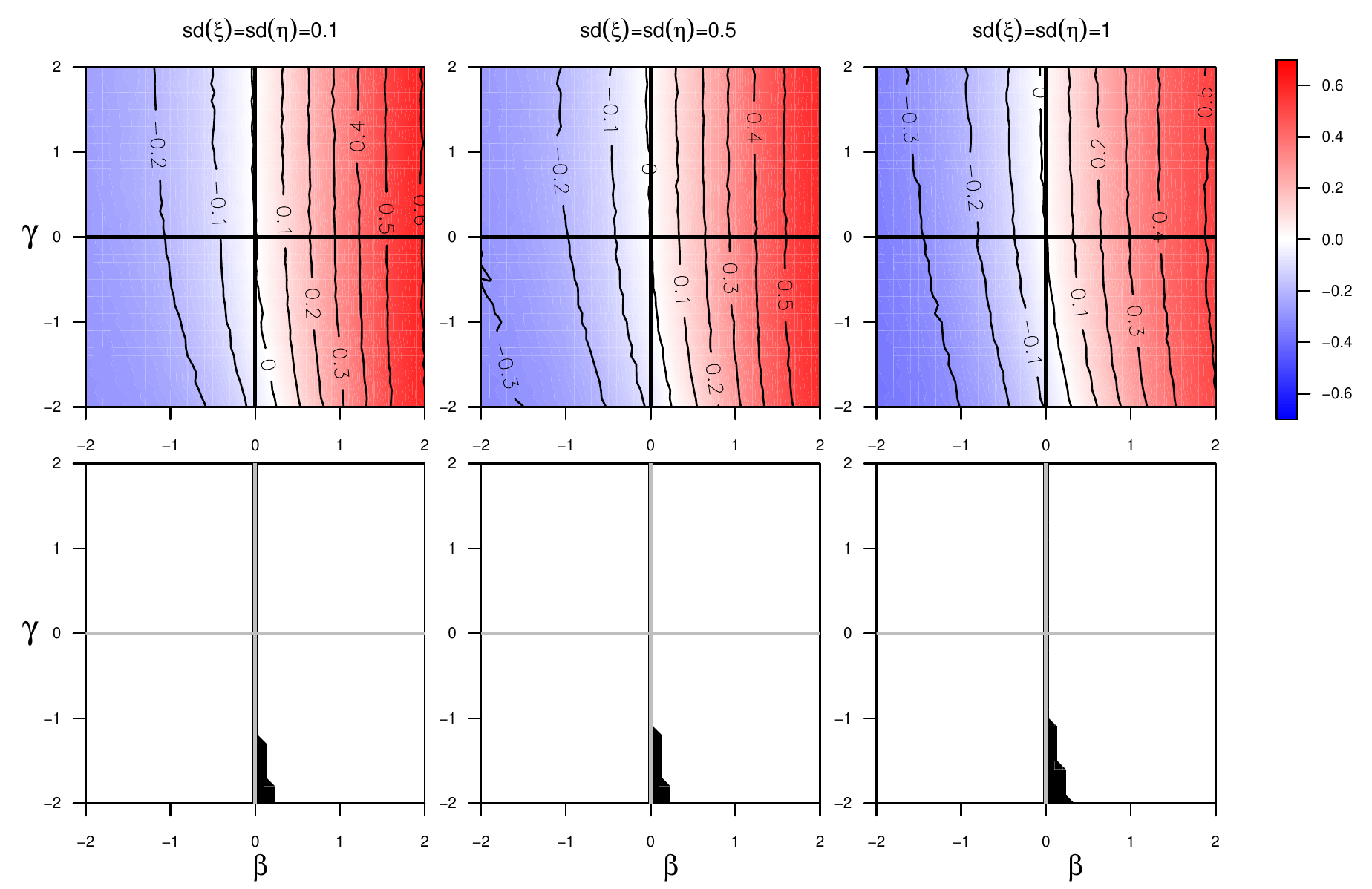} 
	\caption{Population average DE under cluster randomization and different variance of individual-level susceptibility ($\eta$) and infectiousness ($\xi$).}
	\label{fig:2dsim5}
\end{figure}

Figures \ref{fig:2dsim4} - \ref{fig:2dsim5} correspond to the same study designs as the left column of Figure \ref{fig:sim2} in the main text. 
The region of direction bias increases with the increase of variance of untreated individual-level susceptibility ($\eta$) and infectiousness ($\xi$).

\begin{figure} 
\centering
	\includegraphics[width=\textwidth]{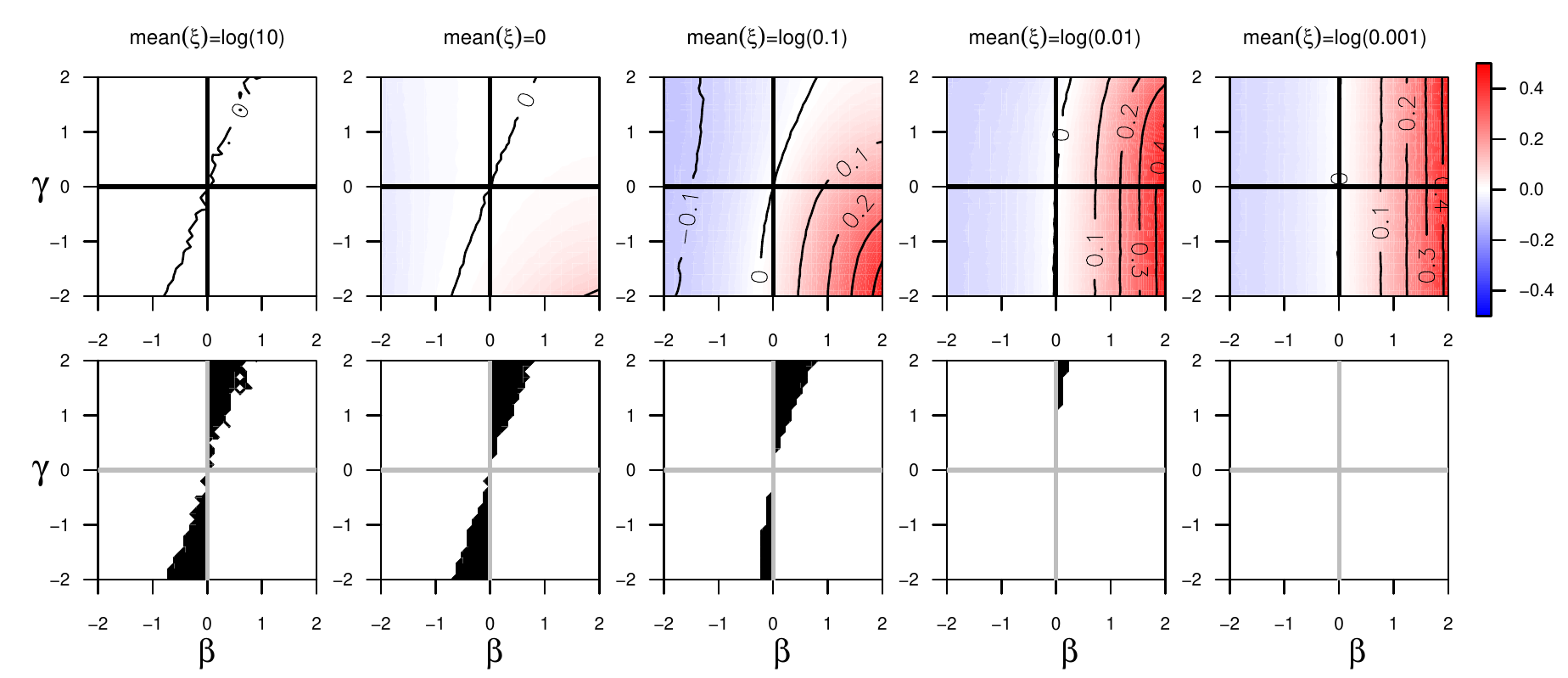} 
	\caption{Population average DE under block randomization and different average individual-level infectiousness ($\xi$).}
	\label{fig:2dsim6}
\end{figure}

\begin{figure} 
\centering
	\includegraphics[width=\textwidth]{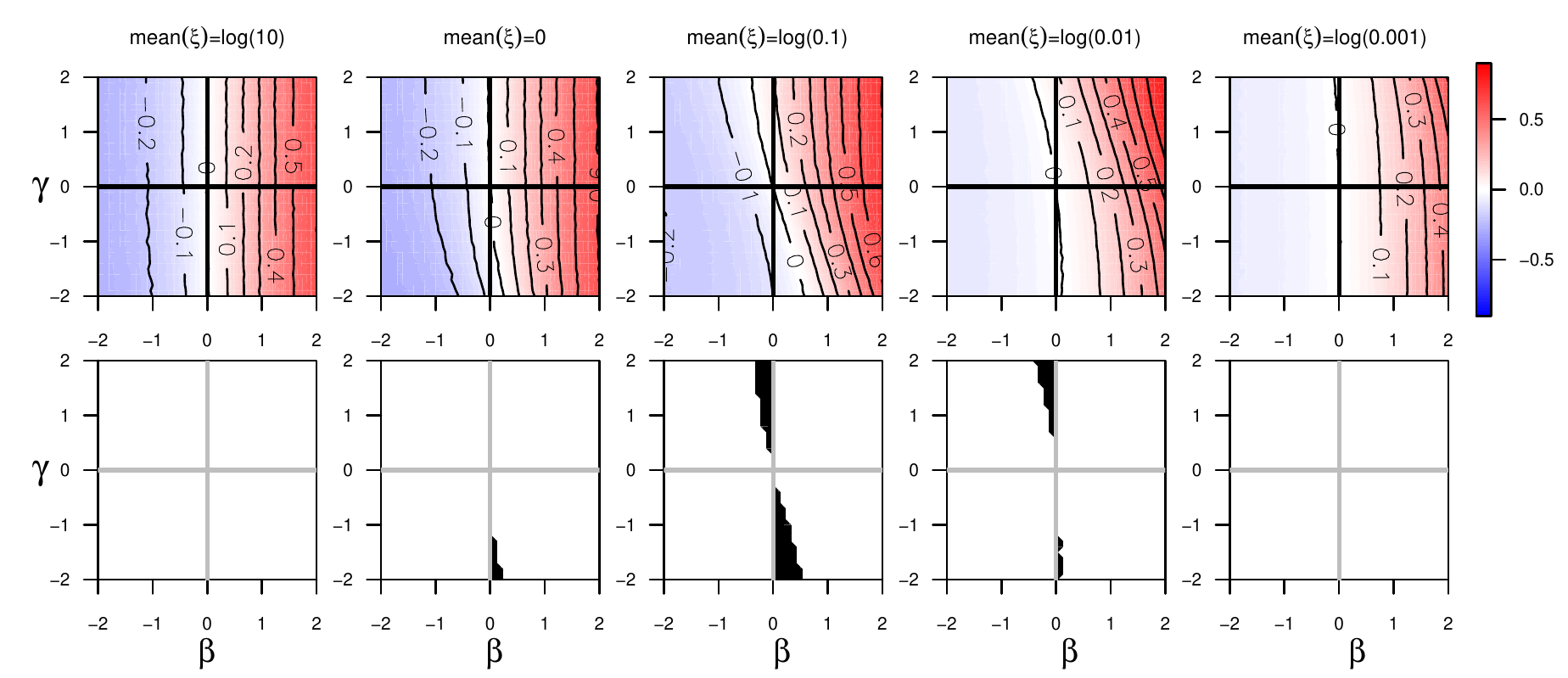} 
	\caption{Population average DE under cluster randomization and different average individual-level infectiousness ($\xi$).}
	\label{fig:2dsim7}
\end{figure}

Figures \ref{fig:2dsim6} - \ref{fig:2dsim7} correspond to the same study designs as the middle column of Figure \ref{fig:sim2} in the main text.  Under the block randomization the region of direction bias gets smaller as the mean untreated within-cluster infectiousness decreases (Figure \ref{fig:2dsim6}). However, under cluster randomization this relationship is non-monotonic: the region of direction bias is very small for extreme (small or large) values of average untreated within-cluster infectiousness, and largest when the mean of $\xi$ is somewhere in the middle (Figure \ref{fig:2dsim7}).

\begin{figure}[H]
\centering
	\includegraphics[width=0.95\textwidth]{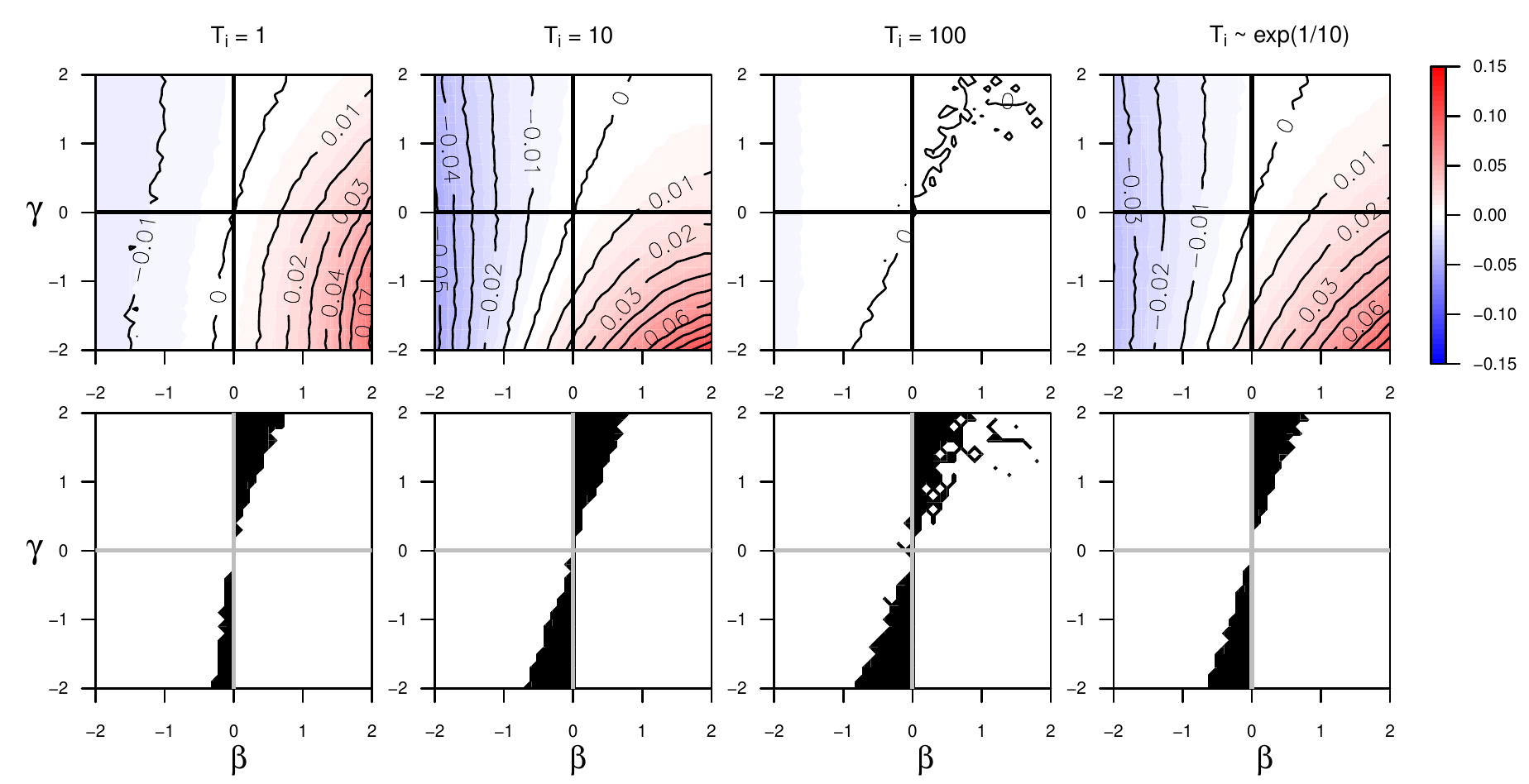} 
	\caption{Population average DE under block randomization and different observation time.}
	\label{fig:2dsim8}
\end{figure}

\begin{figure}[H]
\centering
	\includegraphics[width=0.95\textwidth]{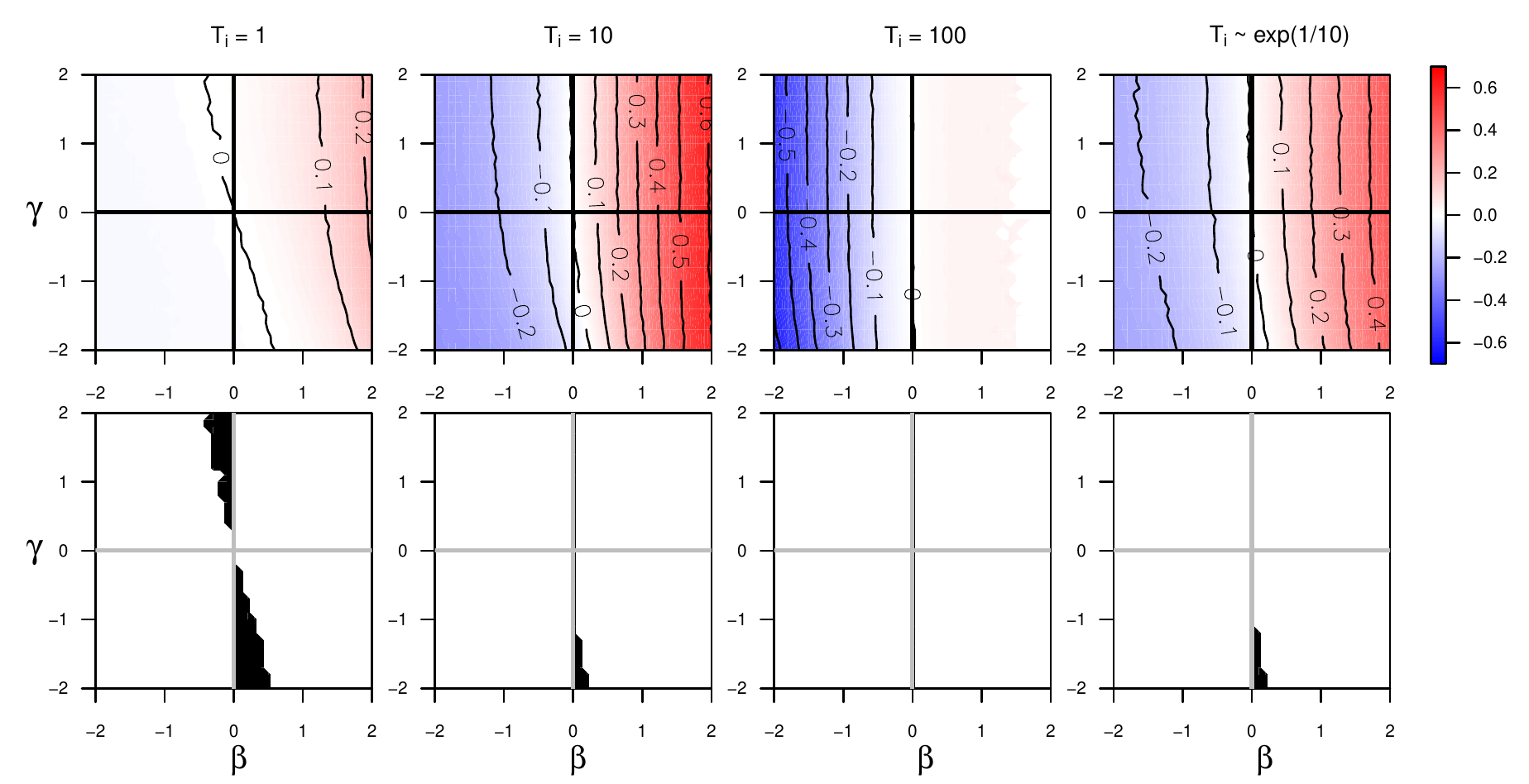} 
	\caption{Population average DE under cluster randomization and different observation time.}
	\label{fig:2dsim9}
\end{figure}

Figures \ref{fig:2dsim8} - \ref{fig:2dsim9} correspond to the same study designs as the right column of Figure \ref{fig:sim2} in the main text.
The region where the $DE(\mathcal{T})$ exhibits direction bias as an estimate of the susceptibility effect increases with the observation time under block randomization (Figure \ref{fig:2dsim8}), but decreases under cluster randomization (Figure \ref{fig:2dsim9}).



\begin{table}[H]
\caption{Parameters that don't change across simulations}
\label{sim.par.same} \centering%
\begin{tabular}{l l}
{} & {} \\ 
\toprule
{Parameter} & {Value} \\
\midrule
{$\beta$} & {0} \\
{$\gamma$} & {[-2 ; 2]} \\
{Step for $\gamma$} & {0.1} \\
{Number of clusters, $N$} & {$N=1000$} \\
{Number of simulations per value of $\gamma$, $N_s$} & {$N_s = 500$} \\
{External FOI, $\alpha_i$} & {$\alpha_i = 0.01$, $i = 1, \ldots, N$} \\
{Infections at $t=0$} & {$y_{ij}(0) = 0$, $j = 1,\ldots, n_i$; $i = 1, \ldots, N$} \\
\bottomrule
\end{tabular} 
\end{table}

\begin{table}[H]
\caption{Parameters that vary across simulations}
\label{sim.par.dif} \centering%
\resizebox{\textwidth}{!}{%
\begin{tabular}{l l l l l l l}
{} & {} & {} & {} & {} \\ 
\toprule
{\#} & {Randomization}   & {Cluster size,} & {Tx assignment} & {Distribution} & {Distribution} & {$\mathcal{T}_i$}\\
{  } &  {design}           & {$n_i$}     & {parameter}  & {of $\eta_{ij}$} & {of $\xi_{ij}$} & {} \\
\midrule
{1.a} & {Bernoulli} & {$n_i = 3, \forall i$} & {$p=0.5$} & {$\eta_{ij} \sim N (\mu=0, \sigma=0.1)$} & {$\xi_{ij} \sim N (\mu=0, \sigma=0.1)$} & {$\mathcal{T}_i=10$} \\ 
{1.b} & {Bernoulli} & {$n_i = 4, \forall i$} & {$p=0.5$} & {$\eta_{ij} \sim N (\mu=0, \sigma=0.1)$} & {$\xi_{ij} \sim N (\mu=0, \sigma=0.1)$} & {$\mathcal{T}_i=10$} \\ 
{1.c} & {Bernoulli} & {$n_i = 8, \forall i$} & {$p=0.5$} & {$\eta_{ij} \sim N (\mu=0, \sigma=0.1)$} & {$\xi_{ij} \sim N (\mu=0, \sigma=0.1)$} & {$\mathcal{T}_i=10$} \\ 
{1.d} & {Bernoulli} & {$n_i \sim 2+\text{Pois}(2)$} & {$p=0.5$} & {$\eta_{ij} \sim N (\mu=0, \sigma=0.1)$} & {$\xi_{ij} \sim N (\mu=0, \sigma=0.1)$} & {$\mathcal{T}_i=10$} \\[0.5em] \hline
%
%
{2.a} & {Block} & {$n_i = 3, \forall i$} & {$m_i=1, \forall i$} & {$\eta_{ij} \sim N (\mu=0, \sigma=0.1)$} & {$\xi_{ij} \sim N (\mu=0, \sigma=0.1)$} & {$\mathcal{T}_i=10$} \\
{2.b} & {Block} & {$n_i = 4, \forall i$} & {$m_i=2, \forall i$} & {$\eta_{ij} \sim N (\mu=0, \sigma=0.1)$} & {$\xi_{ij} \sim N (\mu=0, \sigma=0.1)$} & {$\mathcal{T}_i=10$} \\ 
{2.c} & {Block} & {$n_i = 8, \forall i$} & {$m_i=4, \forall i$} & {$\eta_{ij} \sim N (\mu=0, \sigma=0.1)$} & {$\xi_{ij} \sim N (\mu=0, \sigma=0.1)$} & {$\mathcal{T}_i=10$} \\ 
{2.d} & {Block} & {$n_i \sim 2+\text{Pois}(2)$} & {$m_i=\lfloor {n_i/2} \rfloor, \forall i$} & {$\eta_{ij} \sim N (\mu=0, \sigma=0.1)$} & {$\xi_{ij} \sim N (\mu=0, \sigma=0.1)$} & {$\mathcal{T}_i=10$} \\[0.5em] \hline
%
{3.a} & {Cluster} & {$n_i = 3, \forall i$} & {$p=0.5$} & {$\eta_{ij} \sim N (\mu=0, \sigma=0.1)$} & {$\xi_{ij} \sim N (\mu=0, \sigma=0.1)$} & {$\mathcal{T}_i=10$} \\
{3.b} & {Cluster} & {$n_i = 4, \forall i$} & {$p=0.5$} & {$\eta_{ij} \sim N (\mu=0, \sigma=0.1)$} & {$\xi_{ij} \sim N (\mu=0, \sigma=0.1)$} & {$\mathcal{T}_i=10$} \\ 
{3.c} & {Cluster} & {$n_i = 8, \forall i$} & {$p=0.5$} & {$\eta_{ij} \sim N (\mu=0, \sigma=0.1)$} & {$\xi_{ij} \sim N (\mu=0, \sigma=0.1)$} & {$\mathcal{T}_i=10$} \\
{3.d} & {Cluster} & {$n_i \sim 2+\text{Pois}(2)$} & {$p=0.5$} & {$\eta_{ij} \sim N (\mu=0, \sigma=0.1)$} & {$\xi_{ij} \sim N (\mu=0, \sigma=0.1)$} & {$\mathcal{T}_i=10$} \\[0.5em] \hline
%
{4.a} & {Block} & {$n_i \sim 2+\text{Pois}(2)$} & {$m_i=\lfloor {n_i/2} \rfloor, \forall i$} & {$\eta_{ij} \sim N (\mu=0, \sigma=0.01)$} & {$\xi_{ij} \sim N (\mu=0, \sigma=0.01)$} & {$\mathcal{T}_i=10$} \\
{4.b} & {Block} & {$n_i \sim 2+\text{Pois}(2)$} & {$m_i=\lfloor {n_i/2} \rfloor, \forall i$} & {$\eta_{ij} \sim N (\mu=0, \sigma=0.1)$} & {$\xi_{ij} \sim N (\mu=0, \sigma=0.1)$} & {$\mathcal{T}_i=10$} \\
{4.c} & {Block} & {$n_i \sim 2+\text{Pois}(2)$} & {$m_i=\lfloor {n_i/2} \rfloor, \forall i$} & {$\eta_{ij} \sim N (\mu=0, \sigma=0.5)$} & {$\xi_{ij} \sim N (\mu=0, \sigma=0.5)$} & {$\mathcal{T}_i=10$} \\
{4.d} & {Block} & {$n_i \sim 2+\text{Pois}(2)$} & {$m_i=\lfloor {n_i/2} \rfloor, \forall i$} & {$\eta_{ij} \sim N (\mu=0, \sigma=1)$} & {$\xi_{ij} \sim N (\mu=0, \sigma=1)$} & {$\mathcal{T}_i=10$} \\[0.5em]  \hline
{5.a} & {Cluster} & {$n_i \sim 2+\text{Pois}(2)$} & {$p=0.5$} & {$\eta_{ij} \sim N (\mu=0, \sigma=0.01)$} & {$\xi_{ij} \sim N (\mu=0, \sigma=0.01)$} & {$\mathcal{T}_i=10$} \\
{5.b} & {Cluster} & {$n_i \sim 2+\text{Pois}(2)$} & {$p=0.5$} & {$\eta_{ij} \sim N (\mu=0, \sigma=0.1)$} & {$\xi_{ij} \sim N (\mu=0, \sigma=0.1)$} & {$\mathcal{T}_i=10$} \\
{5.c} & {Cluster} & {$n_i \sim 2+\text{Pois}(2)$} & {$p=0.5$} & {$\eta_{ij} \sim N (\mu=0, \sigma=0.5)$} & {$\xi_{ij} \sim N (\mu=0, \sigma=0.5)$} & {$\mathcal{T}_i=10$} \\
{5.d} & {Cluster} & {$n_i \sim 2+\text{Pois}(2)$} & {$p=0.5$} & {$\eta_{ij} \sim N (\mu=0, \sigma=1)$} & {$\xi_{ij} \sim N (\mu=0, \sigma=1)$} & {$\mathcal{T}_i=10$} \\[0.5em] \hline
{6.a} & {Block} & {$n_i \sim 2+\text{Pois}(2)$} & {$m_i=\lfloor {n_i/2} \rfloor, \forall i$} & {$\eta_{ij} \sim N (\mu=0, \sigma=0.1)$} & {$\xi_{ij} \sim N (\mu=\log(10), \sigma=0.1)$} & {$\mathcal{T}_i=10$} \\
{6.b} & {Block} & {$n_i \sim 2+\text{Pois}(2)$} & {$m_i=\lfloor {n_i/2} \rfloor, \forall i$} & {$\eta_{ij} \sim N (\mu=0, \sigma=0.1)$} & {$\xi_{ij} \sim N (\mu=\log(1), \sigma=0.1)$} & {$\mathcal{T}_i=10$} \\
{6.c} & {Block} & {$n_i \sim 2+\text{Pois}(2)$} & {$m_i=\lfloor {n_i/2} \rfloor, \forall i$} & {$\eta_{ij} \sim N (\mu=0, \sigma=0.1)$} & {$\xi_{ij} \sim N (\mu=\log(0.1), \sigma=0.1)$} & {$\mathcal{T}_i=10$} \\
{6.d} & {Block} & {$n_i \sim 2+\text{Pois}(2)$} & {$m_i=\lfloor {n_i/2} \rfloor, \forall i$} & {$\eta_{ij} \sim N (\mu=0, \sigma=0.1)$} & {$\xi_{ij} \sim N (\mu=\log(0.01), \sigma=0.1)$} & {$\mathcal{T}_i=10$} \\
{6.e} & {Block} & {$n_i \sim 2+\text{Pois}(2)$} & {$m_i=\lfloor {n_i/2} \rfloor, \forall i$} & {$\eta_{ij} \sim N (\mu=0, \sigma=0.1)$} & {$\xi_{ij} \sim N (\mu=\log(0.001), \sigma=0.1)$} & {$\mathcal{T}_i=10$} \\[0.5em]  \hline
{7.a} & {Cluster} & {$n_i \sim 2+\text{Pois}(2)$} & {$p=0.5$} & {$\eta_{ij} \sim N (\mu=0, \sigma=0.1)$} & {$\xi_{ij} \sim N (\mu=\log(10), \sigma=0.1)$} & {$\mathcal{T}_i=10$} \\
{7.b} & {Cluster} & {$n_i \sim 2+\text{Pois}(2)$} & {$p=0.5$} & {$\eta_{ij} \sim N (\mu=0, \sigma=0.1)$} & {$\xi_{ij} \sim N (\mu=\log(1), \sigma=0.1)$} & {$\mathcal{T}_i=10$} \\
{7.c} & {Cluster} & {$n_i \sim 2+\text{Pois}(2)$} & {$p=0.5$} & {$\eta_{ij} \sim N (\mu=0, \sigma=0.1)$} & {$\xi_{ij} \sim N (\mu=\log(0.1), \sigma=0.1)$} & {$\mathcal{T}_i=10$} \\
{7.d} & {Cluster} & {$n_i \sim 2+\text{Pois}(2)$} & {$p=0.5$} & {$\eta_{ij} \sim N (\mu=0, \sigma=0.1)$} & {$\xi_{ij} \sim N (\mu=\log(0.01), \sigma=0.1)$} & {$\mathcal{T}_i=10$} \\
{7.e} & {Cluster} & {$n_i \sim 2+\text{Pois}(2)$} & {$p=0.5$} & {$\eta_{ij} \sim N (\mu=0, \sigma=0.1)$} & {$\xi_{ij} \sim N (\mu=\log(0.001), \sigma=0.1)$} & {$\mathcal{T}_i=10$} \\[0.5em]  \hline
{8.a} & {Block} & {$n_i \sim 2+\text{Pois}(2)$} & {$m_i=\lfloor {n_i/2} \rfloor, \forall i$} & {$\eta_{ij} \sim N (\mu=0, \sigma=0.1)$} & {$\xi_{ij} \sim N (\mu=0, \sigma=0.1)$} & {$\mathcal{T}_i=1$} \\
{8.b} & {Block} & {$n_i \sim 2+\text{Pois}(2)$} & {$m_i=\lfloor {n_i/2} \rfloor, \forall i$} & {$\eta_{ij} \sim N (\mu=0, \sigma=0.1)$} & {$\xi_{ij} \sim N (\mu=0, \sigma=0.1)$} & {$\mathcal{T}_i=10$} \\
{8.c} & {Block} & {$n_i \sim 2+\text{Pois}(2)$} & {$m_i=\lfloor {n_i/2} \rfloor, \forall i$} & {$\eta_{ij} \sim N (\mu=0, \sigma=0.1)$} & {$\xi_{ij} \sim N (\mu=0, \sigma=0.1)$} & {$\mathcal{T}_i=100$} \\
{8.d} & {Block} & {$n_i \sim 2+\text{Pois}(2)$} & {$m_i=\lfloor {n_i/2} \rfloor, \forall i$} & {$\eta_{ij} \sim N (\mu=0, \sigma=0.1)$} & {$\xi_{ij} \sim N (\mu=0, \sigma=0.1)$} & {$\mathcal{T}_i \sim \exp(1/10)$} \\[0.5em]  \hline
{9.a} & {Cluster} & {$n_i \sim 2+\text{Pois}(2)$} & {$p=0.5$} & {$\eta_{ij} \sim N (\mu=0, \sigma=0.1)$} & {$\xi_{ij} \sim N (\mu=0, \sigma=0.1)$} & {$\mathcal{T}_i=1$} \\
{9.b} & {Cluster} & {$n_i \sim 2+\text{Pois}(2)$} & {$p=0.5$} & {$\eta_{ij} \sim N (\mu=0, \sigma=0.1)$} & {$\xi_{ij} \sim N (\mu=0, \sigma=0.1)$} & {$\mathcal{T}_i=10$} \\
{9.c} & {Cluster} & {$n_i \sim 2+\text{Pois}(2)$} & {$p=0.5$} & {$\eta_{ij} \sim N (\mu=0, \sigma=0.1)$} & {$\xi_{ij} \sim N (\mu=0, \sigma=0.1)$} & {$\mathcal{T}_i=100$} \\
{9.d} & {Cluster} & {$n_i \sim 2+\text{Pois}(2)$} & {$p=0.5$} & {$\eta_{ij} \sim N (\mu=0, \sigma=0.1)$} & {$\xi_{ij} \sim N (\mu=0, \sigma=0.1)$} & {$\mathcal{T}_i \sim \exp(1/10)$} \\[0.5em]
\bottomrule
\end{tabular} }
\end{table}

\bibliographystyle{plainnat}
\bibliography{epireg}

\end{document}